\tikzstyle{index on}=[inner sep=2pt, white, circle, fill=black]
\tikzstyle{index off}=[inner sep=2pt, black, circle, draw]
\tikzstyle{index gray}=[inner sep=2pt, black, circle, fill=lightgray,minimum size=0.85cm]
\tikzstyle{index blue}=[inner sep=2pt, black, circle, fill=myblue,minimum size=0.85cm]
\tikzstyle{index white}=[inner sep=2pt, black, circle, fill=white]
\tikzstyle{opaque}=[fill=gray,fill opacity=.1]
\tikzstyle{emptydraw}=[draw opacity=0]
\tikzstyle{counter}=[densely dashed]
\newcommand{\dep}[2]{=\hspace{-0.1cm}(#1,#2)}
\newcommand{\con}[1]{=\hspace{-0.1cm}(#1)}
\newcommand{\NE}{\normalfont\textsc{ne}}
\newcommand{\PL}{\ensuremath{\mathrm{PL}}\xspace}
\newcommand{\PLNE}{\ensuremath{\PL(\NE)}\xspace}
\newcommand{\DEP}{\ensuremath{\PL(\con{\cdot})}\xspace}
\newcommand{\Prop}{\ensuremath{\mathsf{Prop}}\xspace}
\newcommand{\VAL}[1]{\ensuremath{\mathrm{VAL}[#1]}}
\newcommand{\SAT}[1]{\ensuremath{\mathrm{SAT}[#1]}}
\newcommand{\MC}[1]{\ensuremath{\mathrm{MC}[#1]}}
\renewcommand{\phi}{\varphi}
\newcommand{\leqnomode}{\tagsleft@true}
\newcommand{\reqnomode}{\tagsleft@false}
\title{Complexity Results in Team Semantics: Nonemptiness Is Not So Complex}
\author{Aleksi Anttila\inst{1}\orcidID{0009-0007-3497-5411}\textsuperscript{(\Letter)} \and
Juha Kontinen \inst{1}\orcidID{0000-0003-0115-5154} \and
Fan Yang \inst{2}\orcidID{0000-0003-0392-6522}}
\institute{Department of Mathematics and Statistics, University of Helsinki, Finland\\\email{aleksi.ilari.anttila@gmail.com}\\\email{juha.kontinen@helsinki.fi} \and Department of Philosophy and Religious Studies, Utrecht University, the Netherlands\\\email{fan.yang.c@gmail.com}}
\authorrunning{A. Anttila et al.}
\date{}
\begin{document}

\maketitle

\keywords{Team semantics \and Team logic \and Complexity \and Satisfiability problem \and Validity Problem \and Model-checking problem \and Nonemptiness atom.}

\begin{abstract}
We initiate the study of the complexity-theoretic properties of convex logics in team semantics. We focus on the extension of classical propositional logic with the nonemptiness atom $\NE$, a logic known to be both convex and union closed. We show that the satisfiability problem for this logic is $\mathrm{NP}$-complete, that its validity problem is $\mathrm{coNP}$-complete, and that its model-checking problem is in $\mathrm{P}$.
\end{abstract}

\setcounter{footnote}{0}
\section{Introduction} \label{cPLNE:section:introduction}

Team semantics is a mathematical framework for studying concepts and phenomena that arise in contexts involving a plurality of data, typically represented by sets of assignments or probability distributions. Logics employing team semantics have found application in many fields, including database theory \cite{Kontinen:2013:independence,Hannula:2020:polyteam}, Bayesian networks and probabilistic dependencies \cite{Corander:2019,Hirvonen24}, quantum foundations \cite{Durand:2018:probabilistic,Abramsky:2021:team,AlbertG22}, as well as philosophical logic and formal semantics \cite{inqsembook,ciardellibook,hawke,aloni2022}.

The core idea of team semantics is that formulas are interpreted over sets of evaluation points (called \emph{teams}) rather than single evaluation points as in standard Tarskian semantics. Depending on the setting, teams can be, for instance, sets of assignments, propositional valuations, or possible worlds. The origins of team semantics can be traced back to Hodges' \emph{trump semantics} \cite{hodges1997} for Hintikka and Sandu's \emph{independence-friendly logic} \cite{hintikka1989,hintikka1996}; Väänänen developed team semantics on the basis of Hodges' work to interpret \emph{dependence logic} \cite{vaananen2007}, his refinement of independence-friendly logic. Dependence logic extends classical first-order logic with \emph{dependence atoms} such as $\dep{x}{y}$; informally, $\dep{x}{y}$ expresses that the values of the variable $x$ functionally determine the values of $y$ within a team. Other team-based logics follow a similar pattern: a non-team-based logic is extended with connectives whose semantics make essential use of the fact the evaluation point is a team: a set, rather than a singleton. For instance, \emph{propositional inclusion logic} $\PL(\subseteq)$ \cite{hella2014,yang2022}---the propositional version of \emph{inclusion logic} \cite{galliani2012}, another of the most prominent team-based logics---extends classical propositional logic $\PL$ with \emph{inclusion atoms} such as $\alpha \subseteq \beta$, with $\alpha \subseteq \beta$ expressing that whatever truth values the $\PL$-formula $\alpha$ is assigned by some valuation in a team must also be assigned to $\beta$ by some valuation in the team.

Since the introduction of dependence logic, the expressivity and computational complexity of logics within the framework of team semantics have been extensively investigated (see, e.g., \cite{lohmann2013,hannula2018,Durand:2022:tractability,DurandKV24}). We contribute to this literature by characterizing the computational complexity of the satisfiability (SAT), validity (VAL), and model checking (MC) problems of the logic \PLNE \cite{vaananen2014,yang2017}, which extends classical propositional logic \PL with the \emph{nonemptiness atom} $\NE$, true in a team just in case the team is nonempty. Despite its simplicity, this logic has recently been attracting interest because it serves as the propositional basis for Aloni's \emph{bilateral state-based modal logic} BSML \cite{aloni2022,aloni2023}, a modal team-based logic which can be used to account for free-choice inferences and related linguistic phenomena. We show that $\SAT{\PLNE}$ is NP-complete; that $\VAL{\PLNE}$ is coNP-complete; and that $\MC{\PLNE}$ is in P.

The logic $\PLNE$ is also interesting due to its team-semantic \emph{closure properties}. Unlike dependence logic, $\PLNE$ is not \emph{downward closed} (where a logic is downward closed if for each of its formulas $\phi$, the truth of $\phi$ in a team $t$ implies truth in all subteams of $t$). Like inclusion logic, it is \emph{union closed} (the truth of a formula in a nonempty collection of teams implies its truth in the union of the collection); however, $\PLNE$ differs from inclusion logic in that it does not have the \emph{empty team property} (a logic has this property if all formulas are true in the empty team) and in that it is \emph{convex} (the truth of a formula in two teams implies its truth in all teams between the two with respect to the subset relation). The convexity property has not been studied in the literature on team semantics until recently \cite{Hella:2024,ciardelli2024,anttila2025convexteamlogics}, and, to our knowledge, the literature does not contain any results on the complexity-theoretic properties of logics which are convex but not downward closed (convexity can be seen as a generalization of downward closure to a setting in which the empty team property fails).

\begin{table}[h]
    \centering
\begin{tabular}{c|c|c|c} 
\hline
    Logic & SAT & VAL & MC \\
    \hline
    $\PL$  & NP \cite{cook1971,levin1973} & coNP \cite{cook1971,levin1973} & NC$^1$ \cite{Buss1987TheBF}\\
    $\PLNE$ & NP [Corollary \ref{cPLNE:coro:SAT(PLNE)_NP_complete}] & coNP [Prop. \ref{cPLNE:prop:VAL(PLNE)_coNP_complete}] & $\in\mathrm{P}$ [Prop. \ref{cPLNE:prop:MC(PLNE)_in_P}]\\
    $\DEP$  & NP \cite{lohmann2013} & NEXP \cite{virtema2017} & NP \cite{ebbing2012}\\
    $\PL(\subseteq)$ & EXP \cite{hella20192} & coNP \cite{hannula2018} & P \cite{hella2019}\\
    $\PL(\sim)$ & ATIME-ALT(exp, poly) \cite{hannula2018} & ATIME-ALT(exp, poly) \cite{hannula2018} & PSPACE \cite{muller2014}\\
    \hline
\end{tabular}
    \caption{Complexity results for some propositional team-based logics (completeness results unless stated otherwise).}
    \label{cPLNE:table:results}
\end{table}

Table \ref{cPLNE:table:results} lists our results alongside those for classical propositional logic $\PL$, propositional dependence logic $\DEP$ \cite{yangvaananen2016}, propositional inclusion logic ${\PL(\subseteq)}$, and \emph{propositional team logic} $\PL(\sim)$ \cite{vaananen2007,muller2014,luck2020}. The logic $\PL(\sim)$ is expressively complete for all propositional team properties \cite{kontinen2014} and many of the common team-semantic connectives are definable in $\PL(\sim)$, so it provides a natural upper bound for the complexity of team-based extensions of $\PL$. The proof in \cite{lohmann2013} that $\SAT{\DEP}$ is NP-complete---i.e., that adding dependence atoms to $\PL$ does not increase the complexity of the satisfiability problem---makes use of downward closure. Similarly, the proof in \cite{hannula2018} that $\VAL{\PL(\subseteq)}$ is coNP-complete---i.e., that adding inclusion atoms to $\PL$ does not increase the complexity of the validity problem---makes use of union closure; as does the proof of the P-completeness of $\MC{\PL(\subseteq)}$ in \cite{hella2019}. The complexity results for the logic $\PLNE$ follow this pattern: being convex (downward closed up to a point), the complexity of its satisfiability problem is the same as that of $\PL$; being union closed, the complexity of its validity problem is the same as that of $\PL$, and its model-checking problem is tractable (our proofs of these results also make use of the relevant closure properties). As a result, we see that, at least with regard to the satisfiability and validity problems, adding the nonemptiness atom $\NE$ to $\PL$ does not increase complexity---in contrast, and as might be inferred from Table  \ref{cPLNE:table:results}, many other team-based atoms and connectives \emph{do} increase complexity, often quite considerably.

The paper is structured as follows. In Section \ref{cPLNE:section:preliminaries}, we define the syntax and semantics of \PLNE and list some basic properties of this logic. In Section \ref{cPLNE:section:MC}, we treat the model-checking problem; in Section \ref{cPLNE:section:SAT}, the satisfiability problem; and in Section \ref{cPLNE:section:VAL}, the validity problem. Section \ref{cPLNE:section:conclusion} concludes with some discussion of potential future work.

\section{Preliminaries} \label{cPLNE:section:preliminaries}
In this section, we define the syntax and team semantics of classical propositional logic (\PL) and its extension \PLNE by the nonemptiness atom $\NE$.

We fix a (countably infinite) set $\Prop$ of propositional variables.
\begin{definition}[Syntax] \label{cPLNE:def:syntax}
The set of formulas $\alpha$ of \emph{classical propositional logic} \PL is generated by:
\[\alpha::=p \mid\neg p \mid \bot \mid \top \mid \alpha\wedge\alpha\mid\alpha\vee\alpha\]
where $p\in \Prop$. The language of \PLNE extends that of \PL with the \emph{nonemptiness atom} $\NE$.\footnote{Some formulations of the syntax of \PLNE in the literature feature a negation operator $\lnot$ that can be applied to all $\PL$-formulas, with the semantics $t\models \lnot\alpha $ iff for all $v\in t$, $\{v\}\not\models \alpha$. Our results also hold for this variant of \PLNE---this follows from the fact that formulas in this variant can be converted into negation normal form in polynomial time.}
\end{definition}

We let $\bigvee \emptyset:=\bot$, and $\bigwedge \emptyset:=\top$. We also call formulas of $\PL$ \emph{classical formulas}. The metavariable $\alpha$ ranges over classical formulas; and $\Lambda$, over sets of classical formulas; metavariables denoted by other Greek letters range over all (sets of) $\PLNE$-formulas.  We write $\mathsf{P}(\phi)$ for the set of propositional variables appearing in $\phi$. We write $\phi (\psi/\chi)$ for the result of replacing all occurrences of $\chi$ in $\phi$ by $\psi$. We call formulas of the form $p$, $\lnot p$, $\bot$, $\top$, or $\NE$ \emph{literals}.

Let $\mathsf{X}\subseteq \Prop$. A \emph{team with domain $\mathsf{X}$} is a set $t\subseteq 2^\mathsf{X}$ of valuations over $\mathsf{X}$. 

\begin{definition}[Semantics] \label{cPLNE:def:semantics}
For any formula $\phi$ and any team $t$ with domain containing $ \mathsf{P}(\phi)$, the satisfaction relation $t\models\phi$ is defined inductively by:
\begin{align*}
    &t\models p &&\text{iff} &&\text{for all $v\in t$, $ v(p)=1$;}\\
    &t\models \neg p &&\text{iff} &&\text{for all $v\in t$, $ v(p)=0$;}\\
    &t\models \bot &&\text{iff} &&t=\emptyset;\\
    &t\models \top && &&\text{always;}\\
    &t\models\phi\wedge\psi &&\text{iff} &&\text{$t\models\phi$ and $t\models\psi$;}\\
    &t\models\phi\vee\psi &&\text{iff} &&\text{there exist $s,u\subseteq t$ such that $t=s\cup u$, $s\models\phi$, and $u\models\psi$;}\\
    &t\models\NE &&\text{iff} &&\text{$t\neq\emptyset$.}
\end{align*}
For any set $\Gamma\cup\{\phi\}$ of formulas, we write $t\models \Gamma$ if $t\models\psi$ for all $\psi\in\Gamma$, and we write $\Gamma\models\phi$ if $t\models \Gamma$ implies $t\models \phi$. We write $\phi\models \psi$ for $\{\phi\}\models \psi$. We write $\models\phi$ if $t\models\phi$ for all teams $t$ with domain containing $ \mathsf{P}(\phi)$.
\end{definition}

\begin{figure}[t]
  \centering
  \subfigure[{$t \models p \vee \lnot p$}]{
  \label{fig:a}
    \begin{tikzpicture}[>=latex,scale=.8]
    
    \draw[opaque,rounded corners] (.2,1.8) rectangle (1.8, .2);
    
    \draw (-1,1) node[index gray, minimum size=0.8cm] (wp) {$v_{pq}$};
    \draw (1,1) node[index gray, minimum size=0.8cm] (wpq) {$v_{p\overline{q}}$};
    \draw (-1,-1) node[index gray, minimum size=0.8cm] (wq) {$v_{\overline{p}q}$};
    \draw (1,-1) node[index gray, minimum size=0.8cm] (w4) {$v_{\overline{pq}}$};
  	\end{tikzpicture}
  } 
  \hspace{.5cm}
 \subfigure[$s \models (p\land \NE) \vee (\lnot p \land \NE)$]{
   \label{fig:b}
   \hspace{0.26cm}
    \begin{tikzpicture}[>=latex,scale=.8]
    
    \draw[opaque,rounded corners] (-1.8,1.8) rectangle (-.2, -1.8);
    
    \draw (-1,1) node[index gray, minimum size=0.8cm] (wp) {$v_{pq}$};
    \draw (1,1) node[index gray, minimum size=0.8cm] (wpq) {$v_{p\overline{q}}$};
    \draw (-1,-1) node[index gray, minimum size=0.8cm] (wq) {$v_{\overline{p}q}$};
    \draw (1,-1) node[index gray, minimum size=0.8cm] (w4) {$v_{\overline{pq}}$};
  	
  	\end{tikzpicture}
      \hspace{0.26cm}
  }
    \hspace{.5cm}
 \subfigure[$u \models ((p\land \NE) \vee (c \land \NE))\land \NE$ \; $u \models \Diamond p \land \Diamond c$]{
   \hspace{0.5cm}
   \label{fig:c}
    \begin{tikzpicture}[>=latex,scale=.8]
    
    \draw[opaque,rounded corners] (-1.8,-1.2) -- (-1.8,-0.8) -- (0.8,1.8) --  (1.8, 1.8) -- (1.8,0.8) -- (-0.8,-1.8) -- (-1.8,-1.8) -- (-1.8,-1.2);
    
    \draw (-1,1) node[index gray, minimum size=0.8cm] (wp) {$v_{pc}$};
    \draw (1,1) node[index gray, minimum size=0.8cm] (wpq) {$v_{p\overline{c}}$};
    \draw (-1,-1) node[index gray, minimum size=0.8cm] (wq) {$v_{\overline{p}c}$};
    \draw (1,-1) node[index gray, minimum size=0.8cm] (w4) {$v_{\overline{pc}}$};
  	
  	\end{tikzpicture}
      \hspace{0.5cm}
  }
  \caption{Examples of the Semantics}
  \label{fig:examples}
  \end{figure}

Figure \ref{fig:examples} displays some examples of the semantics. Each node represents a valuation, and is labelled according to what is evaluated as true ($1$) by that valuation, e.g., $v_{p\overline{q}}(p)=1$ and  $v_{p\overline{q}}(q)=0$. The encircled area is a team we wish to highlight. The team $t$ in figure \ref{fig:a} satisfies the formula $p\vee \lnot p$ because it can be split into two subteams, one ($\{v_{p\overline{q}}\}$) satisfying the disjunct $p$, the other (the empty team $\emptyset$) satisfying the other disjunct $\lnot p$ (the connective $\vee$ is commonly known as the \emph{split disjunction}). With the nonemptiness atom $\NE$, we can enforce that teams be split into nonempty subteams: the team $s$ in Figure \ref{fig:b} satisfies the formula $(p\land \NE) \vee (\lnot p \land \NE)$ because it can be split into a nonempty team ($\{v_{pq}\}$) satisfying $p$, and a nonempty team ($\{v_{\overline{p}q}\}$) satisfying $\lnot p$.

Figure \ref{fig:c} relates to the applications of the atom $\NE$ in formal semantics and philosophical logic. On one possible interpretation of team semantics, teams represent \emph{information states}: if the true information that one possesses is represented by a team $t$, one knows that the valuation that represents the way that things actually are (the actual world) is among the valuations in $t$, and hence whatever holds in all valuations in $t$ must also be the case in the actual world.

Consider the defined operator $\Diamond \phi := (\phi \land \NE)\vee\top$---observe that the formula $\Diamond \phi$ is true in a team $t$ just in case $t$ contains a nonempty subteam in which $\phi$ is true. This operator is known as the \emph{might-operator}: it, along with other similar connectives, has been used to model the meanings of epistemic possibility modalities such as the `might' in ``It might be raining'' (see, e.g., \cite{veltman,yalcin}). For if one's information state contains a nonempty substate that supports the assertion that it is raining (if $t\models \Diamond r$), then the actual world might be one in which it is raining, and so, for all that one knows, it might be raining.

Now assume that Bob says, ``Sue went to the park or to the cinema.'' One natural inference from this might be that Bob deems it possible both that Sue went to the park and that she went to the cinema. The atom $\NE$ has been used to model inferences such as this one by Aloni and her collaborators (see, e.g., \cite{aloni2022,degano2025}). On Aloni's account, humans have a cognitive tendency to disregard structures that verify sentences by virtue of some empty configuration (the \emph{neglect-zero tendency}). In the setting of team semantics, these empty configurations are represented by the empty team, and the neglect-zero tendency is modelled by appending `$\land \NE$' to all subformulas of any formula that formalises a sentence that is being interpreted under this tendency. Then, for instance, the inference discussed above may be accounted for by pointing to the entailment $((p\land \NE)\vee (c\land \NE))\land \NE\models \Diamond p \land \Diamond c$---Bob's utterance, when interpreted under the tendency (we append `$\land \NE$' to all subformulas of the formula $p\vee c$, resulting in the formula $((p\land \NE)\vee (c\land \NE))\land \NE$) indicates that according to his information, Sue might have gone to the park and might have gone to the cinema ($\Diamond p \land \Diamond c$). See Figure \ref{fig:c}.\footnote{We have simplified and modified this account due to space constraints and to make it function in the propositional setting. For more on the actual account, see \cite{aloni2022,degano2025}.}

We define standard \emph{closure properties}. We say that $\phi$ has the
\begin{align*}
&\text{\textbf{empty team property}} &&\text{iff}  &&\emptyset \models \phi;\\
&\text{\textbf{downward closure property}
}  &&\text{iff} &&[t\models \phi\text{ and } s\subseteq t]\implies s\models \phi;\\
&\text{\textbf{union closure property}} &&\text{iff} &&[t\models \phi \text{ for all }t\in T\neq \emptyset]\implies \bigcup T\models \phi;\\
&\text{\textbf{flatness property}} &&\text{iff} &&  t\models \phi \iff [\{v\}\models\phi\text{ for all } v\in t];\\ 
&\text{\textbf{convexity property}
}  &&\text{iff} &&[t\models \phi, u\models \phi,\text{ and } u\subseteq s\subseteq t]\implies s\models \phi.
\end{align*}

It is easy to see that $\phi$ is flat iff it has the empty team, downward closure, and union closure properties.

\begin{proposition}[\cite{anttila2025convexteamlogics}] \label{cPLNE:prop:closure_props}
    All formulas of $\PLNE$ satisfy the union closure and convexity properties. All formulas of $\PL$ additionally satisfy the empty team property, and hence also the downward closure and flatness properties.
\end{proposition}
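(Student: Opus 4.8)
The plan is to prove Proposition \ref{cPLNE:prop:closure_props} by a straightforward induction on the structure of formulas, handling union closure and convexity simultaneously (since they interact through the disjunction case), and then treating the empty team property for $\PL$ separately as an easy secondary induction. For the atomic cases, I would observe directly that literals $p$, $\lnot p$, $\bot$, $\top$ are flat by inspection of the semantics (each is a universally quantified condition over the valuations in the team, or a condition on emptiness that is in fact monotone in both directions among the relevant teams), and that $\NE$ is union closed (a union of nonempty teams is nonempty) and convex (if $t \neq \emptyset$, $u \neq \emptyset$, and $u \subseteq s \subseteq t$, then $s \neq \emptyset$); note $\NE$ is the one atom lacking the empty team property, which is why $\PLNE$ is not flat. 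The conjunction case is immediate from the induction hypothesis for both properties, since $t \models \phi \wedge \psi$ is just the conjunction of the two satisfaction claims and the ambient team is the same.

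The disjunction case is where the real work lies, and I expect it to be the main obstacle. For union closure: suppose $t_i \models \phi \vee \psi$ for each $i$ in a nonempty index set $I$, witnessed by $t_i = s_i \cup u_i$ with $s_i \models \phi$ and $u_i \models \psi$. The natural move is to set $s := \bigcup_i s_i$ and $u := \bigcup_i u_i$; then $s \cup u = \bigcup_i t_i$, and by the induction hypothesis (union closure for $\phi$ and $\psi$, using that $I \neq \emptyset$ so both families are nonempty) we get $s \models \phi$ and $u \models \psi$, hence $\bigcup_i t_i \models \phi \vee \psi$. For convexity: suppose $t = s \cup u$ with $s \models \phi$, $u \models \psi$, and $t' = s' \cup u'$ with $s' \models \phi$, $u' \models \psi$, and $t' \subseteq r \subseteq t$. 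The subtlety is to produce a good split of the intermediate team $r$. The trick I would use: put $s_r := (s \cap r) \cup s'$ and $u_r := (u \cap r) \cup u'$. Then $s_r \cup u_r = ((s \cup u) \cap r) \cup (s' \cup u') = (t \cap r) \cup t' = r \cup t' = r$ (using $t' \subseteq r \subseteq t$). Moreover $s' \subseteq s_r \subseteq s$ because $s' \subseteq s_r$ trivially, and $s \cap r \subseteq s$ while $s' \subseteq t' \subseteq t = s \cup u$ — hmm, this last inclusion $s' \subseteq s$ is not automatic, so I need to be slightly more careful and instead take $s_r := (s \cap r) \cup s'$ and check $s' \subseteq s_r \subseteq s \cup (\text{something})$; the clean fix is to observe $s \cap r \subseteq s_r \subseteq s \cup s' \subseteq s \cup t' \subseteq s \cup t = t$, but for the convexity IH on $\phi$ I need to sandwich $s_r$ between two teams \emph{both} satisfying $\phi$. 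So the right choice is the sandwich $s' \subseteq s_r \subseteq s \cup s'$, and then apply union closure to $\{s, s'\}$ to get $s \cup s' \models \phi$, and convexity of $\phi$ to the triple $s' \subseteq s_r \subseteq s \cup s'$ to conclude $s_r \models \phi$; symmetrically $u_r \models \psi$. This gives $r = s_r \cup u_r \models \phi \vee \psi$. This bootstrapping of convexity off union closure in the same induction is the one genuinely delicate point.

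Finally, for the empty team property of $\PL$-formulas, I would run a separate trivial induction: $\emptyset \models p$, $\emptyset \models \lnot p$, $\emptyset \models \top$ vacuously, $\emptyset \models \bot$ by definition; conjunction is immediate; and for disjunction, $\emptyset = \emptyset \cup \emptyset$ with each half satisfying the respective disjunct by IH. Then flatness of $\PL$-formulas follows from the remark in the excerpt that flatness is equivalent to the conjunction of the empty team, downward closure, and union closure properties — and downward closure itself follows from convexity plus the empty team property, since if $t \models \phi$ and $s \subseteq t$ then $\emptyset \subseteq s \subseteq t$ with $\emptyset \models \phi$ and $t \models \phi$, so $s \models \phi$ by convexity. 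Since all of this is cited from \cite{anttila2025convexteamlogics}, I would likely present only the disjunction case of the main induction in detail and relegate the rest to routine verification.
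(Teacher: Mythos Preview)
Your argument is correct, including the delicate disjunction case for convexity: defining $s_r := (s \cap r) \cup s'$ and $u_r := (u \cap r) \cup u'$, checking $s_r \cup u_r = r$, and then using the sandwich $s' \subseteq s_r \subseteq s \cup s'$ together with the union-closure induction hypothesis (applied to $\{s,s'\}$) before invoking the convexity induction hypothesis is exactly the right manoeuvre. The remaining cases and the derivation of downward closure and flatness for $\PL$ from convexity plus the empty team property are also handled correctly.

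As for comparison: the paper does not actually prove this proposition---it simply cites \cite{anttila2025convexteamlogics} and moves on---so there is no in-paper argument to compare against. Your write-up therefore supplies strictly more than the paper does here. If you want it to match the paper's style, a one-line pointer to the cited reference would suffice; if you want a self-contained treatment, what you have (with the disjunction case spelled out and the rest declared routine) is appropriate.
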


\begin{proposition}\label{cPLNE:prop:empty_team_property_iff_PL}
For $\phi\in \PLNE$, $\phi$ has the empty team property iff $\phi\in \PL$.
\end{proposition}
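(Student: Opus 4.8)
The plan is to prove the two directions separately. The direction from right to left is immediate: if $\phi\in\PL$, then $\phi$ has the empty team property by Proposition~\ref{cPLNE:prop:closure_props}. So the work is all in the converse, where we assume $\phi\in\PLNE$ has the empty team property and must produce a $\PL$-formula equivalent to $\phi$.

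The key step is to observe that the empty team property, together with the convexity that \emph{every} $\PLNE$-formula enjoys (Proposition~\ref{cPLNE:prop:closure_props}), already forces full downward closure. Indeed, given a team $t$ with $t\models\phi$ and any $s\subseteq t$, we have $\emptyset\subseteq s\subseteq t$ with $\emptyset\models\phi$ (empty team property) and $t\models\phi$, so convexity yields $s\models\phi$. Since $\phi$ is also union closed (Proposition~\ref{cPLNE:prop:closure_props}), $\phi$ now has the empty team, downward closure, and union closure properties, hence is flat by the remark preceding Proposition~\ref{cPLNE:prop:closure_props}.

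It then remains to pass from a flat formula to a classical one. Writing $\mathsf{X}:=\mathsf{P}(\phi)$, a finite set, I would associate to each valuation $v\in 2^{\mathsf{X}}$ the classical formula $\theta_v:=\bigwedge\{\,p : p\in\mathsf{X},\ v(p)=1\,\}\wedge\bigwedge\{\,\lnot p : p\in\mathsf{X},\ v(p)=0\,\}$, for which $\{w\}\models\theta_v$ iff $w\!\restriction\!\mathsf{X}=v$, and set $\alpha:=\bigvee\{\,\theta_v : v\in 2^{\mathsf{X}},\ \{v\}\models\phi\,\}$ (which is $\bot$ when no singleton satisfies $\phi$, using the convention $\bigvee\emptyset=\bot$). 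Since $\alpha\in\PL$ is also flat and, by construction, $\{v\}\models\alpha$ iff $\{v\}\models\phi$ on singletons, flatness of both formulas gives $\phi\equiv\alpha$ on all teams with domain $\supseteq\mathsf{X}$ (the easy restriction property of the semantics handles teams over larger domains), which is what is required. The proof is otherwise routine; the one point worth isolating is the convexity-plus-empty-team argument for downward closure, after which flatness and the standard normal-form translation finish the job.
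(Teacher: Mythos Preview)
Your argument is internally sound, but it proves the wrong statement. In this paper, ``$\phi\in\PL$'' is a \emph{syntactic} condition: it means that $\phi$ is built without using the atom $\NE$ (see Definition~\ref{cPLNE:def:syntax}, and the uses of the proposition later in the paper, e.g.\ the remark that $\phi$ has the empty team property iff $\phi=\phi^f$, and the line ``First check whether $\phi\in\PL$'' in the proof of Proposition~\ref{cPLNE:prop:SAT(PLNE)_in_NP}). What you establish is that a $\PLNE$-formula with the empty team property is \emph{semantically equivalent} to some classical formula $\alpha$; this does not show that $\phi$ itself is $\NE$-free, which is what the proposition asserts.

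The paper's proof of the nontrivial direction is a short structural induction: one shows directly that if $\NE$ occurs anywhere in $\phi$, then $\emptyset\not\models\phi$. The key observation is that $\emptyset\models\psi\vee\chi$ forces $\emptyset\models\psi$ and $\emptyset\models\chi$ (the only split of $\emptyset$ is $\emptyset\cup\emptyset$), and similarly for $\wedge$, so a single $\NE$ in any subformula already kills the empty team property. Taking the contrapositive gives the syntactic conclusion. Your convexity-plus-empty-team argument for downward closure and the normal-form translation are correct and pleasant, but they are not needed here, and they cannot recover the syntactic claim: equivalence to some $\alpha\in\PL$ does not by itself rule out that $\phi$ contains $\NE$.
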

\begin{proof}
    The right-to-left direction follows by Proposition \ref{cPLNE:prop:closure_props}. For the left-to-right direction, an easy induction on the complexity of $\phi$ shows that if $\NE$ occurs in $\phi$, then $\phi$ does not have the empty team property. We only give the disjunction case here. If $\phi=\psi\vee\chi$ and $\NE$ occurs in $\phi$, then it must occur in $\psi$ or $\chi$. Without loss of generality, assume that $\NE$ occurs in $\psi$. Then, by the induction hypothesis, $\psi$ does not have the empty team property, but then clearly neither does $\psi\vee\chi$.
\end{proof}

We also have that the team semantics of formulas of $\PL$ on singletons coincide with their standard single-valuation semantics: $\{v\}\models \alpha \iff v\models \alpha$. Therefore:

\begin{proposition} \label{cPLNE:prop:CPL_flat_standard_semantics}
    For any $\alpha\in \PL$:
\[t\models \alpha \iff \{v\}\models \alpha \text{ for all }v\in t \iff v\models \alpha \text{ for all }v\in t  .\]
\end{proposition}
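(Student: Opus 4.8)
The plan is to break the stated chain of equivalences into its two halves and dispatch each separately. The first equivalence, $t\models\alpha$ iff $\{v\}\models\alpha$ for all $v\in t$, is literally the flatness property, so I would simply appeal to Proposition~\ref{cPLNE:prop:closure_props}, which gives flatness (indeed the empty team, downward closure, and union closure properties) for every $\PL$-formula. The remaining work is the second equivalence, $\{v\}\models\alpha \iff v\models\alpha$, i.e.\ that team satisfaction restricted to singleton teams agrees with the ordinary single-valuation semantics of $\PL$; I would prove this by induction on the structure of $\alpha$.

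For the induction, the literal cases are routine: unwinding the clauses of Definition~\ref{cPLNE:def:semantics}, $\{v\}\models p$ asserts $w(p)=1$ for the unique $w\in\{v\}$, which is exactly $v(p)=1$, i.e.\ $v\models p$; the case of $\neg p$ is symmetric; $\{v\}\models\bot$ fails since $\{v\}\neq\emptyset$, matching $v\not\models\bot$; and $\top$ holds on both sides. Conjunction is immediate from the induction hypothesis. The only case worth pausing on is disjunction: the sole subsets of $\{v\}$ are $\emptyset$ and $\{v\}$, so any decomposition $\{v\}=s\cup u$ forces $s=\{v\}$ or $u=\{v\}$; using that the leftover piece (possibly $\emptyset$) automatically satisfies any $\PL$-formula by the empty team property from Proposition~\ref{cPLNE:prop:closure_props}, one obtains $\{v\}\models\alpha\vee\beta$ iff $\{v\}\models\alpha$ or $\{v\}\models\beta$, and the induction hypothesis then converts this to $v\models\alpha$ or $v\models\beta$, i.e.\ $v\models\alpha\vee\beta$.

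I do not expect any genuine obstacle here: the statement is essentially a repackaging of flatness together with the agreement between the singleton and standard semantics for $\PL$. The one point to be careful about is the disjunction step, where the argument really does invoke the empty team property of $\PL$; this is precisely why the proposition would fail for $\PLNE$-formulas in which $\NE$ occurs, since such formulas lack that property by Proposition~\ref{cPLNE:prop:empty_team_property_iff_PL}.
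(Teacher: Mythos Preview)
Your proposal is correct and follows exactly the route the paper takes: the first equivalence is flatness from Proposition~\ref{cPLNE:prop:closure_props}, and the second is the observation (stated just before the proposition in the paper) that $\{v\}\models\alpha\iff v\models\alpha$, which the paper simply asserts without spelling out the induction. You have supplied the details the paper omits, including the correct handling of the disjunction case via the empty team property.
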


\begin{corollary}\label{cPLNE:coro:CPL_entailment_eq_team_entailment}
For any set $\Lambda\cup\{\alpha\}$ of classical formulas, $\Lambda\models^c\alpha \iff \Lambda\models\alpha$, where $\models^c$ stands for the usual (single-valuation) entailment relation.
\end{corollary}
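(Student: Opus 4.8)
The plan is to reduce everything to Proposition~\ref{cPLNE:prop:CPL_flat_standard_semantics}, which already bridges the two semantics for classical formulas: a classical formula holds in a team exactly when it holds, in the standard single-valuation sense, at every valuation in that team. Given this, the corollary becomes a matter of moving back and forth between a team and its constituent singletons.

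For the direction $\Lambda\models^c\alpha\Rightarrow\Lambda\models\alpha$, I would take an arbitrary team $t$ (with domain $\supseteq\mathsf{P}(\Lambda\cup\{\alpha\})$, as required by Definition~\ref{cPLNE:def:semantics}) with $t\models\Lambda$. Applying Proposition~\ref{cPLNE:prop:CPL_flat_standard_semantics} to each $\beta\in\Lambda$, every $v\in t$ satisfies $v\models\beta$, so each such $v$ is a model of $\Lambda$ in the single-valuation sense; by $\Lambda\models^c\alpha$ we get $v\models\alpha$ for every $v\in t$, and one more application of Proposition~\ref{cPLNE:prop:CPL_flat_standard_semantics} yields $t\models\alpha$. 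For the converse $\Lambda\models\alpha\Rightarrow\Lambda\models^c\alpha$, I would take a single valuation $v$ with $v\models\Lambda$ and pass to the singleton team $\{v\}$: Proposition~\ref{cPLNE:prop:CPL_flat_standard_semantics} gives $\{v\}\models\beta$ for each $\beta\in\Lambda$, hence $\{v\}\models\Lambda$; then $\Lambda\models\alpha$ gives $\{v\}\models\alpha$, and invoking the proposition once more (in the singleton case) gives $v\models\alpha$.

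I do not anticipate a genuine obstacle; the only point needing a little care is the bookkeeping about domains — making sure every team and valuation under consideration ranges over a set of variables containing $\mathsf{P}(\Lambda\cup\{\alpha\})$, so that the satisfaction relation is defined — but this is automatic from the conventions fixed in Definition~\ref{cPLNE:def:semantics}. The argument is also insensitive to the cardinality of $\Lambda$, since Proposition~\ref{cPLNE:prop:CPL_flat_standard_semantics} is applied separately to each member of $\Lambda$.
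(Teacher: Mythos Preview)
Your argument is correct and is exactly the reasoning the paper leaves implicit: the corollary is stated without proof, immediately after Proposition~\ref{cPLNE:prop:CPL_flat_standard_semantics}, and your two directions simply unpack that proposition in the obvious way.
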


\section{Model-Checking Problem} \label{cPLNE:section:MC}

The model-checking problem $\MC{\mathrm{L}}$ for a team-based logic $\mathrm{L}$ is defined as follows: given a formula $\phi$ of $\mathrm{L}$ and a team $t$, determine whether $t\models\phi$. In this section, we show that $\MC{\PLNE}\in \mathrm{P}$. This is proved using an argument that is similar to that used for propositional inclusion logic $\PL(\subseteq)$ in \cite{hella2019}, but we make some modifications to account for the fact that the empty team property fails in $\PLNE$ (whereas it holds in $\PL(\subseteq)$).

Let $\mathrm{maxsub}(t,\phi)$ denote the maximum (with respect to the subset relation) subteam $s$ of $t$ such that $s\models \phi$. Note that if $\phi$ is union closed and has the empty team property, such a maximum subteam always exists---this is the case in particular for $\phi \in\PL(\subseteq)$. This is not the case, however, for formulas of $\PLNE$: the formula $\NE$, for instance, is not satisfiable in any subteam of the empty team. We do, however, have the following property:
\begin{lemma} \label{cPLNE:lemma:union_closed_maxsub}
    For union-closed $\phi$, if $\phi$ is satisfiable in some subteam of $t$, then $\mathrm{maxsub}(t,\phi)$ exists.
\end{lemma}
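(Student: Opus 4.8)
The plan is to exhibit $\mathrm{maxsub}(t,\phi)$ explicitly as the union of all subteams of $t$ satisfying $\phi$, and use union closure to show this union itself satisfies $\phi$. Concretely, let $\mathcal{S} = \{s \subseteq t : s \models \phi\}$. By hypothesis $\phi$ is satisfiable in some subteam of $t$, so $\mathcal{S} \neq \emptyset$. Set $s^\ast := \bigcup \mathcal{S}$. Since each $s \in \mathcal{S}$ is a subteam of $t$, we have $s^\ast \subseteq t$, so $s^\ast$ is itself a subteam of $t$.

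The key step is to observe that $s^\ast \models \phi$. This is exactly an application of the union closure property (as defined in the Preliminaries and guaranteed for all $\PLNE$-formulas by Proposition \ref{cPLNE:prop:closure_props}, though here we only assume $\phi$ is union closed): $\mathcal{S}$ is a nonempty collection of teams each satisfying $\phi$, hence $\bigcup \mathcal{S} = s^\ast \models \phi$. Therefore $s^\ast \in \mathcal{S}$, and by construction $s \subseteq s^\ast$ for every $s \in \mathcal{S}$; in particular $s^\ast$ is the $\subseteq$-largest element of $\mathcal{S}$, i.e.\ $s^\ast = \mathrm{maxsub}(t,\phi)$, which thus exists.

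One small point to check is that the domain conventions are respected: every team in $\mathcal{S}$ has domain equal to that of $t$ (we only form subsets of $t$, not changes of domain), so $\bigcup \mathcal{S}$ again has the domain of $t$, which contains $\mathsf{P}(\phi)$ since $t \models \phi$ makes sense; hence the satisfaction relation $s^\ast \models \phi$ is well-defined. There is essentially no obstacle here — the only subtlety, and the reason the nonemptiness hypothesis is needed, is that union closure as stated requires a \emph{nonempty} collection of teams, so without the assumption that $\phi$ holds in some subteam of $t$ the collection $\mathcal{S}$ could be empty and the argument would fail (indeed, as the text notes, $\mathrm{maxsub}(t,\NE)$ does not exist when $t = \emptyset$).
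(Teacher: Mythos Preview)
Your proof is correct and follows exactly the same approach as the paper: the paper's proof is the single line ``Clearly $\mathrm{maxsub}(t,\phi)=\bigcup \{s\subseteq t\mid s\models \phi\}$,'' and you have simply spelled out the details behind this identity, correctly invoking the nonemptiness hypothesis so that union closure applies.
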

\begin{proof}
    Clearly $\mathrm{maxsub}(t,\phi)=\bigcup \{s\subseteq t\mid s\models \phi\}$.
\end{proof}
Let $\mathrm{Maxsub}(t,\phi)=\mathrm{maxsub}(t,\phi)$ if $\mathrm{maxsub}(t,\phi)$ exists, and $\mathrm{Maxsub}(t,\phi)=\mathbf{0}$ otherwise, where $\mathbf{0}$ is a ``null team'' such that $\mathbf{0}\not\models \phi$ for all $\phi$. (Note that $\mathbf{0}$ is not an actual team, and in particular is not the empty team $\emptyset$---it is, rather, a notational shorthand we use to indicate that a formula $\phi$ is not satisfiable in any subteam of a team $t$.) Let $|t|$ denote the size of $t$, and $|\phi|$ the number of symbols in $\phi$.

\begin{lemma} \label{cPLNE:lemma:maxsub}
    If $\phi$ is a literal, then $\mathrm{Maxsub}(t,\phi)$ can be computed in polynomial time with respect to $|t|+|\phi|$.
\end{lemma}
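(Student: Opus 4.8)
The plan is to verify the claim by a straightforward case analysis on the shape of the literal $\phi$, computing $\mathrm{Maxsub}(t,\phi)$ explicitly in each case and observing that each computation is polynomial in $|t|+|\phi|$. Recall that a literal is one of $p$, $\lnot p$, $\bot$, $\top$, or $\NE$, and that for union-closed $\phi$ (all literals are union closed by Proposition \ref{cPLNE:prop:closure_props}) Lemma \ref{cPLNE:lemma:union_closed_maxsub} tells us $\mathrm{maxsub}(t,\phi)$ exists precisely when $\phi$ is satisfiable in some subteam of $t$.

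First I would handle the two trivial cases. If $\phi=\top$, then $s\models\top$ for every $s\subseteq t$, so $\mathrm{Maxsub}(t,\top)=t$. If $\phi=\bot$, then the only subteam satisfying $\bot$ is $\emptyset$, which is always a subteam of $t$, so $\mathrm{Maxsub}(t,\bot)=\emptyset$. Both are computed in constant time (or linear time to write $t$ out). Next, for $\phi=\NE$: the subteams of $t$ satisfying $\NE$ are exactly the nonempty ones, so $\mathrm{maxsub}(t,\NE)$ exists iff $t\neq\emptyset$, in which case it equals $t$; thus $\mathrm{Maxsub}(t,\NE)=t$ if $t\neq\emptyset$ and $\mathrm{Maxsub}(t,\NE)=\mathbf{0}$ otherwise — a single emptiness check on $t$.

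The remaining cases are the propositional literals $p$ and $\lnot p$. By Proposition \ref{cPLNE:prop:CPL_flat_standard_semantics}, $s\models p$ iff $v(p)=1$ for all $v\in s$; since $p$ has the empty team property, the collection $\{s\subseteq t\mid s\models p\}$ is nonempty and closed under unions, so $\mathrm{maxsub}(t,p)=\{v\in t\mid v(p)=1\}$. This is computed by scanning each valuation $v\in t$ and testing $v(p)$, which takes time polynomial in $|t|+|\phi|$ (we must locate $p$ among the variables and read off its value). The case $\phi=\lnot p$ is symmetric, with $\mathrm{maxsub}(t,\lnot p)=\{v\in t\mid v(p)=0\}$.

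I do not anticipate a genuine obstacle here; the only point requiring a little care is the bookkeeping around the null team $\mathbf{0}$ and the empty team $\emptyset$ — specifically noting that for every literal except $\NE$ the maximal subteam always exists (because $\emptyset$ is always an admissible subteam, using the empty team property of $p$, $\lnot p$, $\bot$, $\top$), so $\mathbf{0}$ is only ever returned in the $\NE$ case when $t=\emptyset$. One should also be explicit about the representation of teams and valuations so that "polynomial in $|t|+|\phi|$" is meaningful; with the natural encoding (a team as a list of valuations, each a partial function on $\mathsf{P}(\phi)$), each of the above computations is clearly a single pass over the input.
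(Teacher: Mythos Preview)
Your proposal is correct and follows essentially the same approach as the paper: a case distinction on the form of the literal, with the $\NE$ case handled by an emptiness check on $t$. The only difference is presentational---the paper disposes of the remaining classical literals $p$, $\lnot p$, $\bot$, $\top$ by citing the corresponding lemma for $\PL(\subseteq)$ from \cite{hella2019}, whereas you spell out the (easy) computation of $\mathrm{maxsub}(t,\phi)$ explicitly in each case.
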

\begin{proof}
    First check whether $\phi=\NE$. If yes, then if $t=\emptyset$, return $\mathbf{0}$, and otherwise return $t$. Clearly this can be done in polynomial time with respect to $|t|+|\phi|$. If $\phi\neq \NE$, check $\mathrm{maxsub}(t,\phi)$; the result then follows by \cite[Lemma 5.1]{hella2019}.
\end{proof}

\begin{proposition}\label{cPLNE:prop:MC(PLNE)_in_P}
    $\MC{\PLNE}\in \mathrm{P}$.
\end{proposition}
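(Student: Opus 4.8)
The plan is to give a bottom-up polynomial-time algorithm that, for each subformula $\psi$ of the input formula $\phi$, computes $\mathrm{Maxsub}(t,\psi)$, and then to check at the end whether $t \subseteq \mathrm{Maxsub}(t,\phi)$ — equivalently, since $\mathrm{Maxsub}(t,\phi)$ is always a subteam of $t$ (or the null team $\mathbf{0}$), whether $\mathrm{Maxsub}(t,\phi) = t$. This is correct because $\PLNE$-formulas are union closed (Proposition \ref{cPLNE:prop:closure_props}), so by Lemma \ref{cPLNE:lemma:union_closed_maxsub}, $\mathrm{maxsub}(t,\phi)$ exists iff $\phi$ is satisfiable in some subteam of $t$, and $t \models \phi$ iff $\mathrm{maxsub}(t,\phi)$ exists and equals $t$.

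The work is in the recursion clauses. The literal case is handled by Lemma \ref{cPLNE:lemma:maxsub}. For conjunction, I would argue that $\mathrm{maxsub}(t,\psi\wedge\chi) = \mathrm{maxsub}(\mathrm{maxsub}(t,\psi),\chi)$: any subteam of $t$ satisfying both $\psi$ and $\chi$ is a subteam of $\mathrm{maxsub}(t,\psi)$ and satisfies $\chi$, hence a subteam of the right-hand side; conversely the right-hand side is a subteam of $\mathrm{maxsub}(t,\psi) \subseteq t$ satisfying $\chi$, and it satisfies $\psi$ by downward... — wait, $\psi$ need not be downward closed. Here I need to be careful: instead I should use that $\mathrm{maxsub}(t,\psi\wedge\chi)$ is union closed as a set of witnessing subteams, and compute it as $\bigcup\{s \subseteq t \mid s \models \psi \text{ and } s \models \chi\}$; the nesting identity still needs the observation that $\mathrm{maxsub}(\mathrm{maxsub}(t,\psi),\chi) \models \psi$, which holds because $\mathrm{maxsub}(\mathrm{maxsub}(t,\psi),\chi) = \bigcup\{s \subseteq \mathrm{maxsub}(t,\psi) \mid s\models\chi\}$ is a union of subteams of $\mathrm{maxsub}(t,\psi)$ each of which — no. The clean fix, following \cite{hella2019}, is to define $\mathrm{maxsub}(t,\psi\wedge\chi)$ via a fixpoint: repeatedly replace $t$ by $\mathrm{maxsub}(t,\psi)$ then by $\mathrm{maxsub}(t,\chi)$, alternating, until stable; this halts in at most $2|t|$ rounds since the team only shrinks, and the stable team is the largest subteam satisfying both conjuncts. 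Handle $\mathbf{0}$ by propagating it. For disjunction, I claim $\mathrm{maxsub}(t,\psi\vee\chi) = \mathrm{maxsub}(t,\psi) \cup \mathrm{maxsub}(t,\chi)$ (with $\mathbf{0}$ treated as $\emptyset$ inside the union, unless both summands are $\mathbf{0}$, in which case the result is $\mathbf{0}$): if $t = s\cup u$ with $s\models\psi$, $u\models\chi$ then $s \subseteq \mathrm{maxsub}(t,\psi)$ and $u\subseteq \mathrm{maxsub}(t,\chi)$ so $t$ is contained in the union; conversely $\mathrm{maxsub}(t,\psi) \models \psi$ and $\mathrm{maxsub}(t,\chi) \models \chi$ by definition, so their union satisfies $\psi\vee\chi$, and it is the largest such subteam. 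One subtlety: when, say, $\mathrm{maxsub}(t,\psi) = \mathbf{0}$ but $\mathrm{maxsub}(t,\chi)$ exists, the split $s = \emptyset$, $u = \mathrm{maxsub}(t,\chi)$ only works if $\emptyset \models \psi$, which fails when $\NE$ occurs in $\psi$ (Proposition \ref{cPLNE:prop:empty_team_property_iff_PL}); so in that case $\mathrm{maxsub}(t,\psi\vee\chi)$ is the largest subteam splittable as $s \cup u$ with $\emptyset \neq s \models \psi$ and $u \models \chi$ — but since $\mathrm{maxsub}(t,\psi) = \mathbf{0}$ means no nonempty subteam of $t$ satisfies $\psi$ either, we correctly get $\mathbf{0}$. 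I would state the disjunction identity as: $\mathrm{maxsub}(t,\psi\vee\chi)=\mathbf{0}$ if $\mathrm{maxsub}(t,\psi)=\mathbf{0}$ and $\mathrm{maxsub}(t,\chi)=\mathbf{0}$, and $\mathrm{maxsub}(t,\psi\vee\chi) = \mathrm{maxsub}(t,\psi)' \cup \mathrm{maxsub}(t,\chi)'$ otherwise, where $X' = \emptyset$ if $X = \mathbf{0}$ and $X' = X$ else — and here I must double-check the mixed case actually yields the true maximum, which requires that if $\mathrm{maxsub}(t,\psi) = \mathbf{0}$ then every witnessing split of a subteam $r \subseteq t$ for $\psi\vee\chi$ has its $\psi$-part empty, forcing that part's formula to have the empty team property, contradiction unless we are in a degenerate subcase — so the mixed case cannot arise when $\NE$ occurs in $\psi$, and when it does not occur, $\mathrm{maxsub}(t,\psi) \supseteq \emptyset$ always exists, so $\mathrm{maxsub}(t,\psi) = \mathbf{0}$ never happens. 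Thus the mixed case with one side $\mathbf{0}$ forces that side to genuinely contribute nothing, and the identity is safe.

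Finally I would bound the running time: there are $O(|\phi|)$ subformulas, each team manipulation ($\mathrm{maxsub}$ for a literal, union, or a fixpoint iteration of length $\le 2|t|$) costs $\mathrm{poly}(|t| + |\phi|)$ by Lemma \ref{cPLNE:lemma:maxsub} and the observation that $|t|$ does not grow along the recursion (every computed team is a subteam of $t$ or $\mathbf{0}$), so the total is polynomial in $|t| + |\phi|$. The main obstacle I anticipate is getting the disjunction and conjunction identities exactly right in the presence of the null team $\mathbf{0}$ and the failure of the empty team property — in particular making sure that "$\mathrm{maxsub}$ of one disjunct is $\mathbf{0}$" is handled so that the algorithm neither spuriously returns $\mathbf{0}$ for a satisfied formula nor spuriously returns a team for an unsatisfied one; the case analysis above (leaning on Proposition \ref{cPLNE:prop:empty_team_property_iff_PL} to pin down exactly when $\emptyset$ is an admissible split component) is what resolves it.
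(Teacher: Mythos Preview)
Your overall strategy---compute $\mathrm{Maxsub}(t,\psi)$ for each subformula and accept iff $\mathrm{Maxsub}(t,\phi)=t$---is exactly the paper's, but two points do not go through as written.

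First, the disjunction rule is wrong in the mixed case. You argue that if $\mathrm{Maxsub}(t,\psi)=\mathbf{0}$ while $\mathrm{Maxsub}(t,\chi)\neq\mathbf{0}$, either the case cannot arise or treating the $\mathbf{0}$ side as $\emptyset$ is harmless. Neither holds: take $\psi=p\wedge\NE$, $\chi=q$, and $t=\{v\}$ with $v(p)=0$, $v(q)=1$. Then $\mathrm{Maxsub}(t,\psi)=\mathbf{0}$ (the empty team fails $\NE$; $\{v\}$ fails $p$) and $\mathrm{Maxsub}(t,\chi)=\{v\}$, yet $\mathrm{Maxsub}(t,\psi\vee\chi)=\mathbf{0}$, since every split needs a subteam satisfying $\psi$ and none exists. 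Your rule returns $\{v\}$. The fix---and the paper's convention---is $\mathbf{0}\cup s:=\mathbf{0}$: if either disjunct is $\mathbf{0}$, so is the whole.

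Second, and more structurally, the running-time bound does not close. Your conjunction clause runs a \emph{local} fixpoint: up to $2|t|$ rounds, each a recursive call $\mathrm{Maxsub}(s,\psi)$ or $\mathrm{Maxsub}(s,\chi)$ on a shrunk team $s$. If the child is itself a conjunction, that call launches its own multi-round fixpoint, and so on; a depth-$d$ nest of conjunctions can spawn on the order of $|t|^d$ leaf calls. Your accounting ``$O(|\phi|)$ subformulas, each costing $\mathrm{poly}$'' silently assumes each subformula is touched once, which is false here. The appeal to \cite{hella2019} is right in spirit but not in detail: what \cite{hella2019} and the paper do is a \emph{single global} fixpoint over a labelling $f:\mathrm{subOcc}(\phi)\to 2^{\mathsf{P}(\phi)}\cup\{\mathbf{0}\}$, alternating one bottom-up pass (literals via $\mathrm{Maxsub}$, $\wedge$ via $\cap$, $\vee$ via $\cup$) with one top-down pass ($\wedge$ copies its label to both children; $\vee$ intersects each child's previous label with the parent's). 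Since $\sum_\psi|f_i(\psi)|$ is bounded by $(|t|+1)|\phi|$ and strictly decreases until the fixed point, the number of passes is at most $2(|t|+1)|\phi|$, each polynomial---no nested recursion. Your local fixpoints can be flattened into this form, but as stated the polynomial bound is not established.
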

\begin{proof}
We want to check whether $t\models \phi$ in polynomial time w.r.t. $|t|+|\phi|$.

    Let $\mathrm{subOcc(\phi)}$ denote the set of all occurrences of subformulas of $\phi$. In the following, we denote occurrences as if they were formulas.

    A function $f:\mathrm{subOcc}(\phi)\to  2^{2^\mathsf{P(\phi)}}\cup \{\mathbf{0}\}$ is called a \emph{labelling function}. We describe an algorithm for computing a sequence $f_0,f_1,\ldots$ of such labelling functions. In what follows, we let $\mathbf{0}\cap s:=\mathbf{0}$ and $\mathbf{0}\cup s:=\mathbf{0}$ for all teams $s$.

    \begin{itemize}
        \item $f_0(\psi)=t$ for all $\psi\in\mathrm{subOcc(\phi)} $.
        \item For odd $i\in \mathbb{N}$, define $f_i(\psi)$ bottom-up as follows:
        \begin{enumerate}
            \item For literal $\psi$, define $f_i(\psi):=\begin{cases}
                \mathrm{Maxsub}(f_{i-1}(\psi),\psi) &\text{ if }f_{i-1}\neq \mathbf{0};\\
                \mathbf{0} &\text{ if }f_{i-1}= \mathbf{0}.
            \end{cases}$.
            \item $f_{i}(\psi\land \chi):= f_{i}(\psi)\cap f_i(\chi)$.
            \item $f_{i}(\psi\vee \chi):=f_{i}(\psi)\cup f_i(\chi)$.
        \end{enumerate}
                \item For even $i\in \mathbb{N}$ larger than $0$, define $f_i(\psi)$ top-down as follows:
        \begin{enumerate} 
        \item $f_i(\phi):=f_{i-1}(\phi)$.

            \item If $\psi=\chi\land\theta$, define $f_i(\chi):=f_i(\theta):=f_i(\psi)$.
         
            \item If $\psi=\chi\vee\theta$, define $f_i(\chi):=f_{i-1}(\chi)\cap f_i(\chi\vee\theta)$ and $f_i(\theta):=f_{i-1}(\theta)\cap f_i(\chi\vee\theta)$.
        \end{enumerate}
    \end{itemize}

    For a given $i\in \mathbb{N}$, an easy induction on the complexity of formulas (with the induction going from more complex formulas to less complex for even $i\in \mathbb{N}$) shows that $f_{i+1}(\psi)\subseteq f_{i}(\psi)$, where we stipulate that $\mathbf{0}\subseteq s$ and $\mathbf{0}\subseteq \mathbf{0}$ (and $s\not \subseteq \mathbf{0}$) for all teams $s$.  It follows that there is an integer $j\leq 2\cdot(|t|+1)\cdot |\phi|$ such that $f_{j+2}=f_{j+1}=f_j$. 
    We denote this fixed point of the sequence $f_0,f_1,\ldots$ by $f_\infty$. By Lemma \ref{cPLNE:lemma:maxsub}, the outcome of $\mathrm{Maxsub}$ is computable in polynomial time with respect to its input. Given this, $f_{i+1}$ can be computed in polynomial time from $f_i$ with respect to $|t|+|\phi|$, so that $f_\infty$ can also be computed in polynomial time with respect to $|t|+|\phi|$.\\

\noindent
    \textbf{Claim 1.} For each $\psi\in \mathrm{subOcc}(\phi)$, $f_\infty(\psi)\models \psi$ iff $f_\infty(\psi)\neq \mathbf{0}$.\\

\noindent
    \emph{Proof of Claim 1.} The left-to-right direction is immediate. We prove the right-to-left direction by induction on $\psi$. Fix an odd $i$ and even $j$ such that $f_\infty=f_i=f_j$.
    \begin{itemize}
        \item If $\psi$ is a literal, the claim is true because $f_\infty=f_i$, and if $f_\infty(\psi)=f_i(\psi)\neq \mathbf{0}$, then $f_\infty(\psi)=  f_i(\psi)=\mathrm{maxsub}(f_{i-1}(\psi),\psi)$, where $f_{i-1}(\psi)\neq \mathbf{0}$.
        \item Assume $\psi=\chi \land \theta$, and that the claim is true for $\chi$ and $\theta$. Since $f_\infty=f_j$, we have $f_\infty(\psi)=f_\infty(\chi)=f_\infty(\theta)$, so clearly by the induction hypothesis, $f_\infty(\psi)\neq \mathbf{0}$ implies $f_\infty(\psi)\models \chi \land \theta$.
        \item Assume $\psi=\chi \vee \theta$, and that the claim is true for $\chi$ and $\theta$. Since $f_\infty=f_i$, we have $f_\infty(\psi)=f_\infty(\chi)\cup f_\infty(\theta)$. By the induction hypothesis, $f_\infty(\chi)\neq \mathbf{0}$ implies $f_\infty(\chi)\models \chi $, and $f_\infty(\theta)\neq \mathbf{0}$ implies $f_\infty(\theta)\models \chi $. We show that $f_\infty(\psi)\neq \mathbf{0}$ implies $f_\infty(\psi)\models \chi \lor \theta$. If $f_\infty(\psi)=f_\infty(\chi)\cup f_\infty(\theta)\not\models \chi \lor \theta$, then $f_\infty(\chi)\not\models \chi $ or $f_\infty(\theta)\not\models \theta $, whence $f_\infty(\chi)= \mathbf{0}$ or $f_\infty(\theta)= \mathbf{0}$, either of which imply $f_\infty(\psi)= \mathbf{0}$.
    \end{itemize}

    In particular, (1) if $f_\infty(\phi)=s$ (for some actual non-null team $s$), then $s\models \phi$. To finish the proof, it therefore suffices to prove that (2) if $t\models \phi$, then $f_\infty(\phi)=t$: for then, to check whether $t\models \phi$, check whether $t=f_\infty(\phi)$. If yes, then $f_\infty(\phi)=t\neq \mathbf{0}$ so by (1), $f_\infty(\phi)=t\models \phi$; if no, then by (2), $t\not\models \phi$.
    
    To prove (2), suppose $t\models \phi$. It is easy to see that then, for each $\psi\in \mathrm{subOcc}(\phi)$, there is a team $t_\psi\subseteq t$ such that $t_\psi\models \psi$ and
    \begin{itemize}
        \item $t_\phi=t$;
        \item $t_{\chi\land\theta}=t_\chi=t_\theta$;
        \item $t_{\chi\vee\theta}=t_\chi \cup t_\theta$.
    \end{itemize}

\noindent
 \textbf{Claim 2.} $t_\psi\subseteq f_i(\psi)$ for all $\psi\in  \mathrm{subOcc}(\phi)$ and all $i \in \mathbb{N}$.\\

    \noindent
    \emph{Proof of Claim 2.} By induction on $i$. For $i=0$ this is obvious since $f_0(\psi)=t$ for all $\psi$. Now assume $i+1$ is odd and that the claim is true for $i$. We prove the claim by a subinduction on $\psi$.
    \begin{itemize}
        \item If $\psi$ is a literal, then since by the induction hypothesis $t_\psi\subseteq f_i(\psi)$, we have $f_i(\psi)\neq \mathbf{0}$, whence $ f_{i+1}(\psi)=\mathrm{Maxsub}(f_i(\psi),\psi)$.  Since $t_\psi\models \psi$ and $t_\psi\subseteq f_i(\psi)$, we have by Lemma \ref{cPLNE:lemma:union_closed_maxsub} that $f_{i+1}(\psi)=\mathrm{Maxsub}(f_i(\psi),\psi)\neq \mathbf{0}$ so $f_{i+1}(\psi)=\mathrm{maxsub}(f_i(\psi),\psi)$; therefore $t_\psi \subseteq f_{i+1}(\psi) $.
        \item Assume that $\psi=\chi \land \theta$. By the subinduction hypothesis, $t_\psi=t_\chi\subseteq f_{i+1}(\chi)$ and $t_\psi=t_\theta\subseteq f_{i+1}(\theta)$, and so $t_\psi \subseteq f_{i+1}(\chi)\cap f_{i+1}(\theta)=f_{i+1}(\psi)$. 
         \item Assume that $\psi=\chi \lor \theta$. By the subinduction hypothesis, $t_\chi\subseteq f_{i+1}(\chi)$ and $t_\theta\subseteq f_{i+1}(\theta)$. Then $t_\psi=t_\chi\cup t_\theta \subseteq f_{i+1}(\chi)\cup f_{i+1}(\theta)=f_{i+1}(\psi)$.
    \end{itemize}
    Now assume $i+1$ is even and that the claim is true for $i$. We prove the claim by a top-to-bottom subinduction on $\psi$.
    \begin{itemize}
        \item For $\psi=\phi$, by the induction hypothesis, $t_\psi\subseteq f_i(\psi)=f_{i+1}(\psi)$.
        \item Assume that $\psi=\chi \land \theta$. By the subinduction hypothesis, $t_\chi=t_\theta=t_\psi\subseteq f_{i+1}(\psi)=f_{i+1}(\chi)=f_{i+1}(\theta)$.
        \item Assume that $\psi=\chi \lor \theta$. By the subinduction hypothesis, $t_\psi\subseteq f_{i+1}(\psi)$. By the induction hypothesis, $t_\chi\subseteq f_{i}(\chi)$. Then since $t_\chi \subseteq t_\psi$, we have $t_\chi \subseteq f_{i}(\chi)\cap t_\psi \subseteq f_{i}(\chi)\cap f_{i+1}(\psi)=f_{i+1}(\chi) $. Similarly $t_\theta \subseteq f_{i+1}(\theta) $.
    \end{itemize}

    Given Claim 2, $t=t_\phi\subseteq f_\infty(\phi)$. Since $f_\infty(\phi)\subseteq f_0(\phi)=t$, we have $f_\infty(\phi)=t$.
    \end{proof}
    
    In the proof of Proposition \ref{cPLNE:prop:MC(PLNE)_in_P}, we have followed the proof for $\MC{\PL(\subseteq)}\in \mathrm{P}$ in \cite{hella2019} in defining a sequence of labelling functions $f_i$ for all $i\in \mathbb{N}$, and finding a fixed point $f_\infty$ of this sequence. Presenting the proof in this way allows for an elegant way to prove the required facts about the relevant labelled subteams of the target team  $t$, as well as for an easy comparison with the proof in \cite{hella2019}. It appears, though, that for $\PLNE$, the sequence of functions reaches its fixed point only a few elements into this sequence---seemingly by $i=4$---whereas this need not be the case for $\PL(\subseteq)$. We leave a detailed investigation into this matter, as well as any possible improvements to the algorithm, for future work.
\section{Satisfiability Problem} \label{cPLNE:section:SAT}

We follow \cite{hannula2018} in defining the satisfiability problem $\SAT{\mathrm{L}}$ for a team-based logic $\mathrm{L}$ as follows: given a formula $\phi$ of $\mathrm{L}$, determine whether there exists some nonempty team $t$ such that $t\models\phi$ (satisfaction in the empty team is easy to decide, both for \PLNE---see Proposition \ref{cPLNE:prop:empty_team_property_iff_PL}---as well as for many other team-based logics---see \cite{hannula2018}). In this section, we show that $\SAT{\PLNE}$ is NP-complete. Hardness follows immediately by the NP-hardness of $\SAT{\PL}$ together with Proposition \ref{cPLNE:prop:CPL_flat_standard_semantics}, so we focus on showing $\SAT{\PLNE}\in \mathrm{NP}$. 

\begin{lemma}\label{cPLNE:lemma:satisfiability_in_small_team}
    For $\phi \in \PLNE$, if $t\models\phi$, then there exists $s\subseteq t$ such that $s\models \phi$ and $|s|\leq |\phi|_{\NE}$, where $|\phi|_{\NE}$ is the number of occurrences of $\NE$ in $\phi$.
\end{lemma}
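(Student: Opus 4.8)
The statement says: if $t \models \phi$, there is a subteam $s \subseteq t$ with $s \models \phi$ and $|s| \le |\phi|_{\NE}$. The natural approach is induction on the structure of $\phi$, tracking simultaneously that the witnessing subteam can be taken small, with the size bound additive in the number of $\NE$-occurrences. The base cases and the conjunction case will be where the subtlety about the \emph{empty team} lives, and the disjunction case is where the additive accounting of $\NE$-occurrences is used.

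\medskip

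\textbf{Key steps.}

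\emph{Base cases.} For a classical literal $\phi \in \{p, \lnot p, \bot, \top\}$ we have $|\phi|_{\NE} = 0$, so we must produce a subteam of size $0$, i.e. the empty team. This works precisely because classical formulas have the empty team property (Proposition~\ref{cPLNE:prop:closure_props}): if $t \models \phi$ then $\emptyset \subseteq t$ and $\emptyset \models \phi$. For $\phi = \NE$ we have $|\phi|_{\NE} = 1$; since $t \models \NE$ means $t \ne \emptyset$, pick any $v \in t$ and set $s = \{v\}$, which satisfies $\NE$ and has size $1$.

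\emph{Conjunction.} Suppose $\phi = \psi \land \chi$ and $t \models \phi$, so $t \models \psi$ and $t \models \chi$. By the induction hypothesis we get $s_\psi \subseteq t$ with $s_\psi \models \psi$, $|s_\psi| \le |\psi|_{\NE}$, and $s_\chi \subseteq t$ with $s_\chi \models \chi$, $|s_\chi| \le |\chi|_{\NE}$. Here is the obstacle: $s_\psi \cup s_\chi$ need not satisfy either conjunct, since $\PLNE$ is not downward closed, and on the other hand $s_\psi$ alone need not satisfy $\chi$. The fix is convexity (Proposition~\ref{cPLNE:prop:closure_props}): take $s = s_\psi \cup s_\chi$. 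Then $s_\psi \subseteq s \subseteq t$ with $s_\psi \models \psi$ and $t \models \psi$, so by convexity $s \models \psi$; symmetrically $s \models \chi$; hence $s \models \psi \land \chi$. And $|s| \le |s_\psi| + |s_\chi| \le |\psi|_{\NE} + |\chi|_{\NE} \le |\phi|_{\NE}$.

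\emph{Disjunction.} Suppose $\phi = \psi \lor \chi$ and $t \models \phi$, so there are $t_\psi, t_\chi \subseteq t$ with $t = t_\psi \cup t_\chi$, $t_\psi \models \psi$, $t_\chi \models \chi$. Apply the induction hypothesis inside $t_\psi$ and $t_\chi$ to get $s_\psi \subseteq t_\psi$ and $s_\chi \subseteq t_\chi$ with $s_\psi \models \psi$, $s_\chi \models \chi$, $|s_\psi| \le |\psi|_{\NE}$, $|s_\chi| \le |\chi|_{\NE}$. Put $s = s_\psi \cup s_\chi \subseteq t$. The split $s = s_\psi \cup s_\chi$ witnesses $s \models \psi \lor \chi$ directly (no convexity needed here, since each part already satisfies the relevant disjunct), and $|s| \le |s_\psi| + |s_\chi| \le |\psi|_{\NE} + |\chi|_{\NE} = |\phi|_{\NE}$.

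\medskip

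\textbf{Main obstacle.} The one genuinely non-routine point is the conjunction case: naively one would want a single small subteam satisfying both conjuncts, but downward closure fails, so we cannot shrink a witness for $\psi$ and still have it satisfy $\chi$. The resolution is to take the \emph{union} of the two small witnesses and invoke convexity to pull each conjunct back up to that union — this is exactly the closure property the paper highlights, and it is why the $\NE$-count must be summed (additive) rather than maxed across conjuncts. Everything else is a straightforward structural induction; one should also double-check that the $\NE$-count behaves additively over both $\land$ and $\lor$, which it does since $|\psi \land \chi|_{\NE} = |\psi \lor \chi|_{\NE} = |\psi|_{\NE} + |\chi|_{\NE}$.
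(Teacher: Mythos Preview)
Your proof is correct and follows essentially the same route as the paper's: structural induction, with the classical literals handled by the empty team property, $\NE$ by a singleton, the disjunction by splitting and applying the induction hypothesis to each piece, and the conjunction by taking the union of the two small witnesses and invoking convexity. Your identification of the conjunction case as the one non-routine step, and of convexity as the tool that resolves it, matches the paper exactly.
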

\begin{proof}
    By induction on the complexity of $\phi$.
    \begin{itemize}
        \item $\phi$ is a classical literal. Immediate by the empty team property.
        \item $\phi=\NE$. Take any singleton subteam of $t$.
        \item $\phi=\psi\vee\chi$. If $t\models \psi\vee\chi$, we have $t=s\cup u$, where $s\models\psi$ and $u\models \chi$. By the induction hypothesis, there are $s'\subseteq s$ and $u'\subseteq u$ such that $s'\models \psi$, $u'\models \chi$,  $|s'|\leq |\psi|_{\NE}$, and $|u'|\leq |\chi|_{\NE}$. Then $s'\cup u'\subseteq s\cup u\subseteq t$, $s'\cup u'\models \psi\vee\chi$, and $|s'\cup u'|\leq |s'|+|u'| \leq |\psi|_{\NE}+|\chi|_{\NE} =  |\psi\vee\chi|_{\NE}$.
        \item $\phi=\psi\land \chi$.  If $t\models \psi\land \chi$, by the induction hypothesis there are $s,u\subseteq t$ such that $s\models \psi$ and $u\models \chi$. Then since $s,u\subseteq s\cup u\subseteq t$, we have by convexity that $s\cup u \models \psi \land \chi$. And we have $|s\cup u|\leq |s|+|u|\leq  |\psi|_{\NE}+|\chi|_{\NE} = |\psi\land\chi|_{\NE}$
    \end{itemize}
\end{proof}

    \begin{corollary}[Small model property] \label{cPLNE:coro:small_sat}
        If $\phi \in \PLNE$ is satisfiable, it is satisfiable in a team of size $\leq |\phi|_{\NE}$.
    \end{corollary}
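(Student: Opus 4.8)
The plan is to obtain this immediately from Lemma~\ref{cPLNE:lemma:satisfiability_in_small_team}. Suppose $\phi$ is satisfiable; by the definition of satisfiability adopted in this section, this means there is a nonempty team $t$ with $t \models \phi$. Feeding this $t$ into Lemma~\ref{cPLNE:lemma:satisfiability_in_small_team} yields a subteam $s \subseteq t$ with $s \models \phi$ and $|s| \leq |\phi|_{\NE}$, and $s$ is exactly the small satisfying team required. So the whole argument is a single application of the lemma.

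Since the corollary is a one-line consequence of the lemma, I expect no genuine obstacle. The only subtlety worth flagging is the degenerate case $|\phi|_{\NE} = 0$, in which the bound forces $s = \emptyset$. This is consistent: $|\phi|_{\NE} = 0$ means $\NE$ does not occur in $\phi$, so $\phi \in \PL$, and then $\emptyset \models \phi$ already holds by the empty team property (Proposition~\ref{cPLNE:prop:closure_props}); moreover, by Proposition~\ref{cPLNE:prop:CPL_flat_standard_semantics} such a satisfiable $\phi$ is in addition satisfied by some singleton team, which is the form of small model one actually wants when converting this bound into the NP algorithm of the next section. If instead one reads ``satisfiable in a team of size $\leq |\phi|_{\NE}$'' as demanding a nonempty such team, it suffices to replace the bound by $\max(1,|\phi|_{\NE})$ and appeal to the same two propositions.
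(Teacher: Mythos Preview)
Your proposal is correct and matches the paper's approach: the corollary is stated immediately after Lemma~\ref{cPLNE:lemma:satisfiability_in_small_team} with no separate proof, so it is intended as a one-line consequence of that lemma, exactly as you derive it. Your remark on the degenerate case $|\phi|_{\NE}=0$ is a helpful clarification that the paper leaves implicit; the subsequent Proposition~\ref{cPLNE:prop:SAT(PLNE)_in_NP} indeed handles this case separately by first checking whether $\phi\in\PL$.
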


    \begin{proposition} \label{cPLNE:prop:SAT(PLNE)_in_NP}
        $\SAT{\PLNE}\in \mathrm{NP}$.
    \end{proposition}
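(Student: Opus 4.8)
The plan is to give the standard nondeterministic polynomial-time algorithm for satisfiability that leverages the small model property just established. The key observation is that Corollary \ref{cPLNE:coro:small_sat} bounds the size of a witnessing team linearly in $|\phi|$ (since $|\phi|_{\NE} \leq |\phi|$), but a team is a set of valuations over $\mathsf{P}(\phi)$, and there are only $|\mathsf{P}(\phi)|$ relevant propositional variables, so each valuation can be described by $|\mathsf{P}(\phi)|$ bits. Hence a witnessing team can be written down using polynomially many bits in $|\phi|$.

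The algorithm I would describe: on input $\phi$, nondeterministically guess a team $t$ over $\mathsf{P}(\phi)$ with $|t| \leq |\phi|_{\NE}$ (equivalently, guess at most $|\phi|$ valuations, each a bit-string of length $|\mathsf{P}(\phi)|$); then verify deterministically that $t \neq \emptyset$ and $t \models \phi$. The verification step is exactly an instance of the model-checking problem $\MC{\PLNE}$, which by Proposition \ref{cPLNE:prop:MC(PLNE)_in_P} runs in polynomial time with respect to $|t| + |\phi|$, and since $|t|$ is polynomial in $|\phi|$, the whole check is polynomial in $|\phi|$. Correctness is immediate: if $\phi$ is satisfiable, then by Corollary \ref{cPLNE:coro:small_sat} some such small nonempty team exists and will be guessed on an accepting branch; conversely, any guessed $t$ that passes verification witnesses satisfiability by the definition of $\SAT{\PLNE}$.

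I would then combine this with the hardness remark already noted in the section preamble---$\SAT{\PL}$ is NP-hard \cite{cook1971,levin1973}, and by Proposition \ref{cPLNE:prop:CPL_flat_standard_semantics} a classical formula $\alpha$ is satisfiable in the usual sense iff it is satisfiable by a nonempty team (take a singleton team for the witnessing valuation), so $\SAT{\PL}$ reduces trivially to $\SAT{\PLNE}$---to obtain the completeness result stated as Corollary \ref{cPLNE:coro:SAT(PLNE)_NP_complete}. There is essentially no obstacle here: the real work was done in proving the small model property (Lemma \ref{cPLNE:lemma:satisfiability_in_small_team}, which crucially used both the empty team property of classical literals and convexity in the conjunction case) and in the polynomial-time model checking (Proposition \ref{cPLNE:prop:MC(PLNE)_in_P}). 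The only point requiring a moment's care is to confirm that encoding the guessed team takes polynomial space---one should note that only variables in $\mathsf{P}(\phi)$ matter, so valuations need not range over all of $\Prop$---but this is routine.
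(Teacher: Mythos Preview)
Your approach is essentially the paper's: guess a small team and verify with the polynomial-time model checker from Proposition~\ref{cPLNE:prop:MC(PLNE)_in_P}. There is, however, one edge case you gloss over that the paper handles explicitly. When $\phi\in\PL$, we have $|\phi|_{\NE}=0$, so Corollary~\ref{cPLNE:coro:small_sat} only guarantees a satisfying team of size $0$---the empty team---which does not witness satisfiability in the sense of $\SAT{\PLNE}$ (which asks for a \emph{nonempty} team). Your correctness claim ``by Corollary~\ref{cPLNE:coro:small_sat} some such small nonempty team exists'' therefore fails for classical $\phi$. The paper deals with this by first testing whether $\phi\in\PL$ and, if so, deferring to $\SAT{\PL}\in\mathrm{NP}$ via Proposition~\ref{cPLNE:prop:CPL_flat_standard_semantics}; only when $\phi\notin\PL$ (so that $|\phi|_{\NE}\geq 1$) does it guess a nonempty team of size at most $|\phi|_{\NE}$. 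An equally easy fix for your version is to guess a team of size at most $\max(1,|\phi|_{\NE})$ and note that, by flatness, any classical formula satisfied by a nonempty team is already satisfied by a singleton. Your parenthetical ``guess at most $|\phi|$ valuations'' would also suffice, but this bound is not equivalent to $|\phi|_{\NE}$ as you suggest.
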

    \begin{proof}
    First check whether $\phi\in \PL$. If yes, the result follows by Proposition \ref{cPLNE:prop:CPL_flat_standard_semantics} and $\SAT{\PL}\in \mathrm{NP}$. If no, note that we have $|\phi|_{\NE}\geq 1$. The required algorithm  guesses a nonempty team of size at most $|\phi|_{\NE}$ and then runs the deterministic poly-time model-checking algorithm (Proposition \ref{cPLNE:prop:MC(PLNE)_in_P}). It is easy to see that the running time of this procedure is polynomial in $|\phi|$, and hence containment in NP follows.
        \end{proof}
        \begin{corollary}\label{cPLNE:coro:SAT(PLNE)_NP_complete}
            $\SAT{\PLNE}$ is $\mathrm{NP}$-complete.
        \end{corollary}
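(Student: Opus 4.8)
The plan is to obtain the corollary by combining the two standard halves of an NP-completeness proof. Membership in NP is already settled: it is exactly Proposition~\ref{cPLNE:prop:SAT(PLNE)_in_NP}. So the only thing left to do is establish NP-hardness, for which I would reduce from $\SAT{\PL}$, the satisfiability problem of classical propositional logic, which is NP-hard by the Cook--Levin theorem~\cite{cook1971,levin1973}.

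For the reduction I would simply take the identity map. The syntax of $\PL$ is literally a fragment of that of $\PLNE$, so every classical formula $\alpha$ is already a $\PLNE$-formula and no transformation is required. It then suffices to verify that $\alpha$ is classically satisfiable (in the single-valuation sense) if and only if it is $\PLNE$-satisfiable in the sense of this paper, namely true in some \emph{nonempty} team. Both directions are immediate from Proposition~\ref{cPLNE:prop:CPL_flat_standard_semantics}: if $v\models\alpha$ in the classical sense, then $\{v\}$ is a nonempty team with $\{v\}\models\alpha$; conversely, if $t\models\alpha$ for some nonempty team $t$, then any $v\in t$ satisfies $v\models\alpha$. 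Thus the identity map is a log-space (hence polynomial-time) many-one reduction from $\SAT{\PL}$ to $\SAT{\PLNE}$, giving NP-hardness; together with Proposition~\ref{cPLNE:prop:SAT(PLNE)_in_NP} this yields NP-completeness.

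I do not expect any real obstacle. The one subtlety worth flagging is the role of the nonempty-team convention in the definition of the satisfiability problem: if satisfaction in the empty team were allowed, then every $\PL$-formula would count as satisfiable (by the empty team property, Proposition~\ref{cPLNE:prop:closure_props}), and the identity map would no longer be a correct reduction. With the nonempty-team convention in force, Proposition~\ref{cPLNE:prop:CPL_flat_standard_semantics} supplies the needed equivalence, and the hardness direction reduces to a single line.
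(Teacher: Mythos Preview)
Your proposal is correct and matches the paper's own argument exactly: membership is Proposition~\ref{cPLNE:prop:SAT(PLNE)_in_NP}, and hardness is the identity reduction from $\SAT{\PL}$ justified via Proposition~\ref{cPLNE:prop:CPL_flat_standard_semantics}, precisely as the paper indicates at the start of Section~\ref{cPLNE:section:SAT}.
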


\section{Validity Problem}\label{cPLNE:section:VAL}

We follow \cite{hannula2018} in defining the validity problem $\VAL{\mathrm{L}}$ for a team-based logic $\mathrm{L}$ as follows: given a formula $\phi$ of $\mathrm{L}$, determine whether $t\models \phi$ for all nonempty teams $\phi$ with domain $\supseteq \mathsf{P}(\phi)$---i.e., whether $\NE\models \phi$ (note that due to Proposition \ref{cPLNE:prop:empty_team_property_iff_PL} and Corollary \ref{cPLNE:coro:CPL_entailment_eq_team_entailment}, the validity problem of $\PLNE$ defined with respect to all teams---that is, including the empty team---is easily reducible to the validity problem of $\PL$, and vice versa). In this section, we show that $\VAL{\PLNE}$ is $\mathrm{coNP}$-complete.

 We make use of the following notion: the \emph{flattening} $\phi^f\in \PL$ of $\phi\in \PLNE$ is defined by $\phi^f:=\phi(\top/\NE)$ (see \cite{anttila2024}; cf. \cite{vaananen2007}). It is easy to see that $\phi \models \phi^f$; observe also that by Proposition \ref{cPLNE:prop:empty_team_property_iff_PL}, $\phi\in \PLNE$ has the empty team property iff $\phi=\phi^f$.

 \begin{lemma} \label{cPLNE:lemma:NE_validity} For any $\phi\in \PLNE$ without the empty team property, the following are equivalent:
 \begin{enumerate}
     \item $\NE\models \phi$;
     \item for all $\psi\in \mathrm{subOcc}(\phi)$ without the empty team property, $\NE\models \psi$;
     \item for all $\psi\in \mathrm{subOcc}(\phi)$ without the empty team property, $\models \psi^f$.
 \end{enumerate}
    \end{lemma}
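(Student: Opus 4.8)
The plan is to establish the cycle of implications $(1)\Rightarrow(2)\Rightarrow(3)\Rightarrow(1)$. The implication $(2)\Rightarrow(3)$ is immediate: if $\psi\in\mathrm{subOcc}(\phi)$ lacks the empty team property and $\NE\models\psi$, then, since $\psi\models\psi^f$, also $\NE\models\psi^f$; as $\psi^f\in\PL$ is flat (Proposition~\ref{cPLNE:prop:closure_props}), validity over all nonempty teams forces validity over all singletons, so $\psi^f$ is a classical tautology and $\models\psi^f$.

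For $(1)\Rightarrow(2)$, I would reuse the family of labelled subteams constructed in the proof of Proposition~\ref{cPLNE:prop:MC(PLNE)_in_P}: whenever $t\models\phi$, each occurrence $\psi\in\mathrm{subOcc}(\phi)$ comes with a subteam $t_\psi\subseteq t$ such that $t_\psi\models\psi$. Fix $\psi\in\mathrm{subOcc}(\phi)$ without the empty team property. For an arbitrary valuation $v$, the singleton $\{v\}$ is nonempty, so by $(1)$ we have $\{v\}\models\phi$, whence $t_\psi\subseteq\{v\}$ and $t_\psi\models\psi$; since $\psi$ lacks the empty team property, $t_\psi\neq\emptyset$, so $t_\psi=\{v\}$ and $\{v\}\models\psi$. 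Because $\psi$ is union closed (Proposition~\ref{cPLNE:prop:closure_props}) and every nonempty team is a union of a nonempty collection of singleton teams, every nonempty team satisfies $\psi$; that is, $\NE\models\psi$.

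For $(3)\Rightarrow(1)$, I would argue by induction on the structure of $\phi$, establishing the (equivalent) statement that $(3)$ implies $(1)$ for every $\PLNE$-formula without the empty team property; the induction goes through because $\mathrm{subOcc}(\psi)\subseteq\mathrm{subOcc}(\phi)$ for $\psi\in\mathrm{subOcc}(\phi)$, so $(3)$ for $\phi$ restricts to $(3)$ for each of its subformula occurrences. The base case is $\phi=\NE$, for which $(1)$ holds trivially. If $\phi=\psi\land\chi$ lacks the empty team property, then at least one conjunct does too, and $(3)$ applied to $\phi$ itself yields $\models\phi^f=\psi^f\land\chi^f$, hence $\models\psi^f$ and $\models\chi^f$; for each conjunct, if it lacks the empty team property then the induction hypothesis makes it true in every nonempty team, while if it has the empty team property then it coincides with its own---now valid---flattening and so is again true in every nonempty team; thus $\NE\models\psi\land\chi$. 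If $\phi=\psi\lor\chi$ lacks the empty team property, then at least one disjunct does; any disjunct lacking the empty team property is true in every nonempty team by the induction hypothesis, and any disjunct having the empty team property is true in $\emptyset$, so splitting an arbitrary nonempty team $t$ as $t\cup t$, $t\cup\emptyset$, or $\emptyset\cup t$ accordingly gives $t\models\psi\lor\chi$.

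The step I expect to be the main obstacle is the conjunction case of $(3)\Rightarrow(1)$, precisely because a conjunct of $\phi$ may have the empty team property and hence fall outside the set of subformula occurrences that condition $(3)$ directly constrains; the resolution is that $(3)$, applied to the conjunction itself (which, lacking the empty team property, is one of the occurrences it does constrain), already forces the flattening of such a conjunct to be a tautology. Beyond this, the proof is mostly a matter of careful bookkeeping of which subformula occurrences do and do not have the empty team property, together with the standard fact that the validity of a flat formula is independent of the ambient propositional domain.
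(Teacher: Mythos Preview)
Your proposal is correct and follows essentially the same route as the paper: the $(1)\Rightarrow(2)$ and $(2)\Rightarrow(3)$ arguments are identical to the paper's, and your inductive argument for $(3)\Rightarrow(1)$ mirrors the paper's induction for $(3)\Rightarrow(2)$ (with $(2)\Rightarrow(1)$ then being immediate), including the key observation in the conjunction case that $(3)$ applied to the conjunction itself forces the flattening of an empty-team-property conjunct to be valid.
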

    \begin{proof} 
    $1\implies 2$: Suppose $\NE\models \phi$. Fix some nonempty team $t$; we will show that for all $\psi\in \mathrm{subOcc}(\phi)$ without the empty team property, $t\models \psi$. We have that for any $v\in t$, $\{v\}\models \phi$. Then for each $\psi\in\mathrm{subOcc}(\phi)$, there is a subteam $\{v\}_\psi\subseteq \{v\}$ as described in the proof of Proposition \ref{cPLNE:prop:MC(PLNE)_in_P}. For any $\psi\in\mathrm{subOcc}(\phi)$ without the empty team property, we must have $\{v\}_\psi\neq \emptyset$, so $\{v\}_\psi=\{v\}$, whence $\{v\}\models \psi$. Since $v$ was arbitrary, by union closure we have $t\models \psi$.

    $2\implies 3$: By $\psi\models \psi^f$ and the empty team property of $\psi^f$.

    $3\implies 2$: Suppose that for all $\psi\in \mathrm{subOcc}(\phi)$ without the empty team property we have $\models \psi^f$. We show that for all $\psi\in \mathrm{subOcc}(\phi)$, either $\psi$ has the empty team property or $\NE \models \psi$ by induction on the structure of $\psi$.
        
        \begin{itemize}
            \item[-]  For $\psi$ a classical literal, $\psi$ has the empty team property by Proposition \ref{cPLNE:prop:closure_props}.
            \item[-] For $\psi=\NE$, the result is immediate.
            \item[-] For $\psi=\chi \land \theta$, if both $\chi$ and $\theta$ have the empty team property, then so does $\psi$, and we are done. If neither has the empty team property, then neither does $\chi\land \theta$. Then by the induction hypothesis, $\NE\models \chi$ and $\NE\models \theta$, whence $\NE\models \chi \land \theta$. If $\chi$ has the empty team property and $\theta$ does not, then $\chi \land \theta$ does not. Therefore, by assumption, $\models (\chi\land \theta)^f$, whence $\models \chi^f$ and $\models\theta^f$. Since $\chi$ has the empty team property, we have $\chi^f=\chi$, so $\models \chi$, whence also $\NE\models\chi$. And since $\theta$ does not have the empty team property, we have $\NE\models \theta$ by the induction hypothesis, so that $\NE\models \chi \land \theta$. Similarly for the final case.
            \item[-] For $\psi=\chi \vee \theta$, if both $\chi $ and $\theta$ have the empty team property, then so does $\chi\vee\theta$. If neither does, then neither does $\chi\vee\theta$; we have $\NE\models \chi$ and $\NE\models \theta$ by the induction hypothesis, and so $\NE\models \chi \vee\theta $. If $\chi$ does and $\theta$ does not, then we have $\NE\models \theta$ by the induction hypothesis. Then also $\NE\models \chi \vee\theta $ since for any team $t$ we have $t=\emptyset \cup t$. Similarly for the final case.
        \end{itemize}
            $2\implies 1$: Immediate.
    \end{proof}

    \begin{proposition} \label{cPLNE:prop:VAL(PLNE)_coNP_complete}
        $\mathrm{VAL}[\PLNE]$ is $\mathrm{coNP}$-complete.
    \end{proposition}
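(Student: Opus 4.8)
The plan is to prove the two bounds separately.

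\medskip
\noindent\emph{Hardness.} I would reduce from classical propositional validity. Since $\PL\subseteq\PLNE$ and, for any classical $\alpha$, Proposition~\ref{cPLNE:prop:CPL_flat_standard_semantics} together with Corollary~\ref{cPLNE:coro:CPL_entailment_eq_team_entailment} give that $\NE\models\alpha$ iff $\models^c\alpha$, the identity map is a (trivial) reduction from $\VAL{\PL}$ to $\VAL{\PLNE}$; coNP-hardness of $\VAL{\PLNE}$ then follows from that of $\VAL{\PL}$.

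\medskip
\noindent\emph{Membership.} It suffices to show that the complement problem---deciding whether $\NE\not\models\phi$---lies in NP. First decide, in linear time, whether $\NE$ occurs in $\phi$. If it does not, then $\phi\in\PL$, so $\phi$ has the empty team property by Proposition~\ref{cPLNE:prop:empty_team_property_iff_PL}, and hence $\NE\models\phi$ iff $\models\phi$ iff $\models^c\phi$ by Corollary~\ref{cPLNE:coro:CPL_entailment_eq_team_entailment}; non-validity is then witnessed by a single valuation over $\mathsf{P}(\phi)$ falsifying $\phi$, which can be guessed and verified in polynomial time. If $\NE$ does occur in $\phi$, then $\phi$ lacks the empty team property (again by Proposition~\ref{cPLNE:prop:empty_team_property_iff_PL}), so Lemma~\ref{cPLNE:lemma:NE_validity} applies: $\NE\models\phi$ iff for every occurrence $\psi\in\mathrm{subOcc}(\phi)$ in which $\NE$ appears---equivalently, every subformula occurrence lacking the empty team property---the classical formula $\psi^f=\psi(\top/\NE)$ is valid. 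There are only $O(|\phi|)$ such occurrences and each $\psi^f$ is computable in polynomial time, so $\NE\not\models\phi$ is witnessed by guessing one occurrence $\psi$ containing $\NE$ together with a valuation over $\mathsf{P}(\psi)$ falsifying $\psi^f$, and then checking both facts in polynomial time. Hence the complement of $\VAL{\PLNE}$ is in NP, i.e.\ $\VAL{\PLNE}\in\mathrm{coNP}$, and with hardness this yields coNP-completeness.

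\medskip
\noindent\emph{Main obstacle.} The substantive content is already isolated in Lemma~\ref{cPLNE:lemma:NE_validity}, whose union-closure-based argument reduces $\NE$-validity to a family of classical validity checks; given this lemma, the remainder is routine bookkeeping. The one point to be careful about is that checking validity of the single flattening $\phi^f$ is necessary but not sufficient for $\NE\models\phi$---for instance $(p\wedge\NE)\vee(\lnot p\wedge\NE)$ has the valid flattening $p\vee\lnot p$ yet fails in the singleton $\{v\}$ with $v(p)=1$---so the quantification over all subformula occurrences without the empty team property in Lemma~\ref{cPLNE:lemma:NE_validity} is essential to the algorithm's correctness.
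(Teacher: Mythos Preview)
Your proposal is correct and follows essentially the same route as the paper: both split on whether $\NE$ occurs in $\phi$, invoke Lemma~\ref{cPLNE:lemma:NE_validity} to reduce $\NE$-validity to classical validity of the flattenings $\psi^f$ of the relevant subformula occurrences, and place the complement in $\mathrm{NP}$ by guessing such a $\psi$ together with a falsifying valuation; the hardness argument via the identity reduction from $\VAL{\PL}$ is likewise identical. If anything, your version is slightly more careful in explicitly restricting the guessed $\psi$ to occurrences containing $\NE$, and your concluding example nicely illustrates why checking only $\phi^f$ would not suffice.
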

    \begin{proof}
        Inclusion: First check whether $\phi \in \PL$. If yes, then by Corollary \ref{cPLNE:coro:CPL_entailment_eq_team_entailment} and the empty team property, we have $\NE\models \phi $ iff $\models \phi$ iff $\models^c \phi$, so it suffices to check whether $\models^c \phi$---the result then follows by $\VAL{\PL}\in \mathrm{coNP}$. If no, then by Proposition \ref{cPLNE:prop:empty_team_property_iff_PL}, $\phi$ does not have the empty team property, so by Lemma \ref{cPLNE:lemma:NE_validity} and Corollary \ref{cPLNE:coro:CPL_entailment_eq_team_entailment}, it suffices to check whether $\models^c \psi^f$ for all $\psi\in \mathrm{subOcc}(\phi)$. This problem can be shown to be in $\mathrm{coNP}$ by showing that its complement is in NP. For this, we non-deterministically guess a formula $\psi\in \mathrm{subOcc}(\phi)$ and then accept if the (classical) negation of $\psi^f$ is satisfiable (given the standard Tarskian semantics). This is in $\mathrm{NP}$ given that $\SAT{\PL}\in \mathrm{NP}$, and so $\mathrm{VAL}[\PLNE]\in \mathrm{coNP} $.

        Hardness: We reduce $\VAL{\PL}$ to $\mathrm{VAL}[\PLNE]$. Fix $\phi\in \PL$. Given Corollary \ref{cPLNE:coro:CPL_entailment_eq_team_entailment} and the empty team property, $\models^c \phi$ iff $\models \phi$ iff $\NE\models \phi $.
    \end{proof}

\section{Conclusion} \label{cPLNE:section:conclusion}
In this article, we have studied the complexity aspects of the logic $\PLNE$. In particular, we proved that its model-checking problem is in P; that its satisfiability problem is NP-complete; and that its validity problem is coNP-complete. In addition, we proved that $\PLNE$ has certain semantic properties that are interesting in their own right (in Lemmas \ref{cPLNE:lemma:satisfiability_in_small_team} and \ref{cPLNE:lemma:NE_validity}, and in Corollary \ref{cPLNE:coro:small_sat}).

We conclude by outlining some avenues for future research.

As we noted in Section \ref{cPLNE:section:introduction}, our results show that adding the nonemptiness atom $\NE$ to classical propositional logic does not increase the complexity of the satisfiability problem, or that of the validity problem. It would, then, be interesting to study the entailment problem of $\PLNE$, as well as to investigate whether our result concerning the model-checking problem can be improved upon (by either proving a stricter inclusion result, or a hardness result---cf. the P-hardness result for $\PL(\subseteq)$ in \cite{hella2019}), in order to determine whether the same holds for these two problems.

We also mentioned that there are no other results on the complexity of convex but not downward closed logics in the literature. The next natural step, then, would be to study the complexity of other such logics. We conjecture that many of our results would extend naturally to other convex union-closed logics such as the modal extensions of $\PLNE$ (see \cite{aloni2022,aloni2023,anttila2025convexteamlogics}). Analogues of the results involving the use of convexity (such as Lemma \ref{cPLNE:lemma:satisfiability_in_small_team}) may also hold for logics which are convex but not union-closed (see \cite{anttila2025convexteamlogics}).

\begin{credits}
\subsubsection{\ackname} The first author received support from the Academy of Finland, decision number 368671, and from the European Research Council (ERC), grant 101020762. The authors would like to thank Maria Aloni for helpful discussions related to the content of this article; as well as the anonymous reviewers, for their valuable comments and suggestions.

\subsubsection{\discintname}
The authors have no competing interests to declare that are
relevant to the content of this article.
\end{credits}

\bibliographystyle{splncs04}
\bibliography{bibb}

@article{Buss1987TheBF,
  title={The {B}oolean formula value problem is in {ALOGTIME}},
  author={Samuel R. Buss},
  journal={Proceedings of the nineteenth annual ACM symposium on Theory of computing},
  year={1987},
  url={https://api.semanticscholar.org/CorpusID:6489222}
}

@article{aloni2022,
author = {Maria Aloni},
title = {Logic and conversation: the case of free choice},
year = {2022},
pages = {},
volume = {15},
number    = {5},
journal = {Semantics and Pragmatics},
doi = {10.3765/sp.15.5},
}

@article{aloni2023,
    AUTHOR = {Aloni, Maria and Anttila, Aleksi and Yang, Fan},
     TITLE = {State-based modal logics for free choice},
   JOURNAL = {Notre Dame J. Form. Log.},
  FJOURNAL = {Notre Dame Journal of Formal Logic},
    VOLUME = {65},
      YEAR = {2024},
    NUMBER = {4},
     PAGES = {367--413},
      ISSN = {0029-4527,1939-0726},
   MRCLASS = {03B65 (03B45 03B60)},
  MRNUMBER = {4843986},
       DOI = {10.1215/00294527-2024-0027},
}

@misc{anttila2024,
      title={Further remarks on the dual negation in team logics}, 
      author={Aleksi Anttila},
      year={2024},
      eprint={2410.07067},
      archivePrefix={arXiv},
      primaryClass={math.LO},
      url={https://arxiv.org/abs/2410.07067}, 
}

@book{ciardellibook,
    AUTHOR = {Ciardelli, Ivano},
     TITLE = {Inquisitive Logic---Consequence and Inference in the Realm of Questions},
    SERIES = {Trends in Logic---Studia Logica Library},
    VOLUME = {60},
 PUBLISHER = {Springer, Cham},
      YEAR = {2022},
   MRCLASS = {03-02 (03B60)},
  MRNUMBER = {4719694},
       DOI = {10.1007/978-3-031-09706-5},
}

@article{galliani2012,
	title = {Inclusion and exclusion dependencies in team semantics — {On} some logics of imperfect information},
	volume = {163},
	issn = {0168-0072},
	url = {https://doi.org/10.1016/j.apal.2011.08.005},
	abstract = {We introduce some new logics of imperfect information by adding atomic formulas corresponding to inclusion and exclusion dependencies to the language of first order logic. The properties of these logics and their relationships with other logics of imperfect information are then studied. As a corollary of these results, we characterize the expressive power of independence logic, thus answering an open problem posed in Grädel and Väänänen, 2010 [9].},
	number = {1},

   JOURNAL = {Ann. Pure Appl. Logic},
  FJOURNAL = {Annals of Pure and Applied Logic},
	author = {Galliani, Pietro},
	year = {2012},
	keywords = {Dependence, Imperfect information, Independence, Model theory, Team semantics},
	pages = {68--84},
}

@incollection {hella2014,
    AUTHOR = {Hella, Lauri and Luosto, Kerkko and Sano, Katsuhiko and
              Virtema, Jonni},
     TITLE = {The expressive power of modal dependence logic},
 BOOKTITLE = {Advances in modal logic. {V}ol. 10},
     PAGES = {294--312},
 PUBLISHER = {Coll. Publ., London},
      YEAR = {2014},
   MRCLASS = {03B45},
  MRNUMBER = {3329828},
MRREVIEWER = {Arne Meier},
URL = {http://www.aiml.net/volumes/volume10/Hella-Luosto-Sano-Virtema.pdf}
}

@article{hella2019,
	title = {Model checking and validity in propositional and modal inclusion logics},
	volume = {29},
	issn = {0955-792X},
	doi = {10.1093/logcom/exz008},
	number = {5},
	journal = {Journal of Logic and Computation},
	author = {Hella, Lauri and Kuusisto, Antti and Meier, Arne and Virtema, Jonni},
	year = {2019},
	pages = {605--630},
}

@InProceedings{ebbing2012,
author="Ebbing, Johannes
and Lohmann, Peter",
editor="Bielikov{\'a}, M{\'a}ria
and Friedrich, Gerhard
and Gottlob, Georg
and Katzenbeisser, Stefan
and Tur{\'a}n, Gy{\"o}rgy",
title="Complexity of Model Checking for Modal Dependence Logic",
booktitle="SOFSEM 2012: Theory and Practice of Computer Science",
year="2012",
publisher="Springer Berlin Heidelberg",
address="Berlin, Heidelberg",
pages="226--237",
    doi={10.1007/978-3-642-27660-6_19},
isbn="978-3-642-27660-6"
}

@inproceedings{cook1971,
  added-at = {2008-02-20T02:37:25.000+0100},
  address = {New York},
  author = {Cook, S. A.},
  biburl = {https://www.bibsonomy.org/bibtex/20ffd5c109e107044e52d602888dd2399/thau},
  booktitle = {Proceedings of the {T}hird {A}nnual {ACM} {S}ymposium},
  interhash = {902dc7bad8fe0204481a311a8fce0b46},
  intrahash = {0ffd5c109e107044e52d602888dd2399},
  keywords = {complexity propositional-logic sat},
  organization = {ACM},
  pages = {151--158},
  timestamp = {2008-02-20T02:37:25.000+0100},
  title = {The complexity of theorem proving procedures},
    doi={10.1145/800157.805047},
  year = {1971}
}

@article{levin1973,
title = {Universal search problems [in {R}ussian]},
journal = {Problems of Information Transmission},
volume = {9},
issue = {3},
pages = {115--116},
year = {1973},
author = {L. A. Levin},
}

@article{virtema2017,
title = {Complexity of validity for propositional dependence logics},
journal = {Information and Computation},
volume = {253},
pages = {224-236},
year = {2017},
issn = {0890-5401},
doi = {https://doi.org/10.1016/j.ic.2016.07.008},
author = {Jonni Virtema},
abstract = {We study the complexity of the validity problems of propositional dependence logic, modal dependence logic, and extended modal dependence logic. We show that the validity problem for propositional dependence logic is NEXPTIME-complete. In addition, we establish that the corresponding problems for modal dependence logic and extended modal dependence logic coincide. We show containment in NEXPTIMENP, whereas NEXPTIME-hardness follows from the propositional case.}
}

@article{lohmann2013,
author = {Lohmann, Peter and Vollmer, Heribert},
title = {Complexity Results for Modal Dependence Logic},
year = {2013},
issue_date = {April     2013},
publisher = {Springer-Verlag},
address = {Berlin, Heidelberg},
volume = {101},
number = {2},
issn = {0039-3215},
doi = {10.1007/s11225-013-9483-6},
abstract = {Modal dependence logic was introduced recently by V\"{a}\"{a}n\"{a}nen. It enhances the basic modal language by an operator = (). For propositional variables  p  1, . . . ,  p n  , = (  p  1, . . . ,  p    n -1,  p    n  ) intuitively states that the value of p    n   is determined by those of  p  1, . . . ,  p    n -1. Sevenster (J. Logic and Computation, 2009) showed that satisfiability for modal dependence logic is complete for nondeterministic exponential time.In this paper we consider fragments of modal dependence logic obtained by restricting the set of allowed propositional connectives. We show that satisfiability for  poor man's dependence logic , the language consisting of formulas built from literals and dependence atoms using  $${wedge, square, lozenge}$$ (i. e., disallowing disjunction), remains NEXPTIME-complete. If we only allow monotone formulas (without negation, but with disjunction), the complexity drops to PSPACE-completeness.We also extend V\"{a}\"{a}n\"{a}nen's language by allowing classical disjunction besides dependence disjunction and show that the satisfiability problem remains NEXPTIME-complete. If we then disallow both negation and dependence disjunction, satisfiability is complete for the second level of the polynomial hierarchy. Additionally we consider the restriction of modal dependence logic where the length of each single dependence atom is bounded by a number that is fixed for the whole logic. We show that the satisfiability problem for this bounded arity dependence logic is PSPACE-complete and that the complexity drops to the third level of the polynomial hierarchy if we then disallow disjunction.In this way we completely classify the computational complexity of the satisfiability problem for all restrictions of propositional and dependence operators considered by V\"{a}\"{a}n\"{a}nen and Sevenster.},
journal = {Stud. Log.},
month = apr,
pages = {343–366},
numpages = {24},
keywords = {Satisfiability problem, Poor man's logic, Modal logic, Dependence logic, Computational complexity}
}

@article{hella20192,
author = {Hella, Lauri and Kuusisto, Antti and Meier, Arne and Vollmer, Heribert},
title = {Satisfiability of Modal Inclusion Logic: Lax and Strict Semantics},
year = {2019},
issue_date = {January 2020},
publisher = {Association for Computing Machinery},
address = {New York, NY, USA},
volume = {21},
number = {1},
issn = {1529-3785},
doi = {10.1145/3356043},
journal = {ACM Trans. Comput. Logic},
articleno = {7},
numpages = {18},
keywords = {team semantics, computational complexity, satisfiability, Modal inclusion logic}
}

@incollection {hintikka1989,
    AUTHOR = {Hintikka, Jaakko and Sandu, Gabriel},
     TITLE = {Informational independence as a semantical phenomenon},
 BOOKTITLE = {Logic, methodology and philosophy of science, {VIII}
              ({M}oscow, 1987)},
    SERIES = {Stud. Logic Found. Math.},
    VOLUME = {126},
     PAGES = {571--589},
 PUBLISHER = {North-Holland, Amsterdam},
      YEAR = {1989},
   MRCLASS = {03B65 (68S10 68T30)},
  MRNUMBER = {1034575},
MRREVIEWER = {B.\ H.\ Mayoh},
       DOI = {10.1016/S0049-237X(08)70066-1},
}

@book {hintikka1996,
    AUTHOR = {Hintikka, Jaakko},
     TITLE = {The Principles of Mathematics Revisited},
 PUBLISHER = {Cambridge University Press, Cambridge},
      YEAR = {1996},
   MRCLASS = {03A05 (00A30 01A60 03B30 03B60)},
  MRNUMBER = {1410063},
MRREVIEWER = {Charles F. Kielkopf},
       DOI = {10.1017/CBO9780511624919},
}

@article {hodges1997,
    AUTHOR = {Hodges, Wilfrid},
     TITLE = {Compositional semantics for a language of imperfect
              information},
   JOURNAL = {Log. J. IGPL},
  FJOURNAL = {Logic Journal of the IGPL. Interest Group in Pure and Applied
              Logics},
    VOLUME = {5},
      YEAR = {1997},
    NUMBER = {4},
     PAGES = {539--563},
      ISSN = {1367-0751},
   MRCLASS = {03B60},
  MRNUMBER = {1465612},
MRREVIEWER = {Zoran Ognjanovi\'{c}},
       DOI = {10.1093/jigpal/5.4.539},
}

@incollection {kontinen2014,
    AUTHOR = {Kontinen, Juha and M\"{u}ller, Julian-Steffen and Schnoor, Henning
              and Vollmer, Heribert},
     TITLE = {A van {B}enthem theorem for modal team semantics},
 BOOKTITLE = {24th {EACSL} {A}nnual {C}onference on {C}omputer {S}cience
              {L}ogic},
    SERIES = {LIPIcs. Leibniz Int. Proc. Inform.},
    VOLUME = {41},
     PAGES = {277--291},
 PUBLISHER = {Schloss Dagstuhl. Leibniz-Zent. Inform., Wadern},
      YEAR = {2015},
   MRCLASS = {03B45 (68Q85)},
  MRNUMBER = {3441768},
MRREVIEWER = {Gerald L\"{u}ttgen},
DOI = {10.4230/LIPIcs.CSL.2015.277},
}

@phdthesis{luck2020,
  author    = {Martin Lück}, 
  title     = {Team Logic: Axioms, Expressiveness, Complexity}, 
  school    = {Leibniz University Hannover},
  year      = {2020},
  doi = {10.15488/9376},
}

@phdthesis{muller2014,
  author    = {Julian-Steffen Müller}, 
  title     = {Satisfiability and Model Checking in Team Based Logics}, 
  school    = {Leibniz University Hannover},
  year      = {2014}
}

@book {vaananen2007,
    AUTHOR = {Väänänen, Jouko},
     TITLE = {Dependence Logic: A New Approach to Independence Friendly Logic},
    SERIES = {London Mathematical Society Student Texts},
    VOLUME = {70},
 PUBLISHER = {Cambridge University Press, Cambridge},
      YEAR = {2007},
     PAGES = {x+225},
   MRCLASS = {03B60 (03-02)},
  MRNUMBER = {2351449},
MRREVIEWER = {Micha\l  Krynicki},
       DOI = {10.1017/CBO9780511611193},
}

@incollection {vaananen2014,
    AUTHOR = {Väänänen, Jouko},
     TITLE = {Multiverse set theory and absolutely undecidable propositions},
 BOOKTITLE = {Interpreting {G}\"{o}del},
     PAGES = {180--208},
 PUBLISHER = {Cambridge Univ. Press, Cambridge},
      YEAR = {2014},
   MRCLASS = {03E35 (03B60 03E05)},
  MRNUMBER = {3468187},
MRREVIEWER = {A. Kanamori},
DOI = {10.1017/CBO9780511756306.013},
}

@article{veltman,
 ISSN = {00223611, 15730433},
 abstract = {The aim of this paper is twofold: (i) to introduce the framework of update semantics and to explain what kind of semantic phenomena may successfully be analysed in it: (ii) to give a detailed analysis of one such phenomenon: default reasoning.},
 author = {Frank Veltman},
   JOURNAL = {J. Philos. Logic},
  FJOURNAL = {Journal of Philosophical Logic},
 number = {3},
 pages = {221--261},
 publisher = {Springer},
 title = {Defaults in Update Semantics},
 volume = {25},
doi ={10.1007/BF00248150},
url ={https://doi.org/10.1007/BF00248150},
 year = {1996}
}

@article{yalcin,
	author = {Seth Yalcin},
	doi = {10.1093/mind/fzm983},
	journal = {Mind},
	number = {464},
	pages = {983--1026},
	publisher = {Oxford University Press},
	title = {Epistemic Modals},
	volume = {116},
	year = {2007},
       URL = {https://doi.org/10.1093/mind/fzm983},
}

@article {yang2022,
    AUTHOR = {Yang, Fan},
     TITLE = {Propositional union closed team logics},
   JOURNAL = {Ann. Pure Appl. Logic},
  FJOURNAL = {Annals of Pure and Applied Logic},
    VOLUME = {173},
      YEAR = {2022},
    NUMBER = {6},
     PAGES = {Paper No. 103102},
      ISSN = {0168-0072},
   MRCLASS = {03B60 (03B05 03B20 03B70)},
  MRNUMBER = {4385148},
MRREVIEWER = {Giacomo Lenzi},
       DOI = {10.1016/j.apal.2022.103102},
}

@article {yang2017,
    AUTHOR = {Yang, Fan and V\"{a}\"{a}n\"{a}nen, Jouko},
     TITLE = {Propositional team logics},
   JOURNAL = {Ann. Pure Appl. Logic},
  FJOURNAL = {Annals of Pure and Applied Logic},
    VOLUME = {168},
      YEAR = {2017},
    NUMBER = {7},
     PAGES = {1406--1441},
      ISSN = {0168-0072},
   MRCLASS = {03B60 (03B05 03B70 68P15)},
  MRNUMBER = {3638896},
MRREVIEWER = {Martin L\"{u}ck},
       DOI = {10.1016/j.apal.2017.01.007},
}

@article {yangvaananen2016,
    AUTHOR = {Yang, Fan and V\"{a}\"{a}n\"{a}nen, Jouko},
     TITLE = {Propositional logics of dependence},
   JOURNAL = {Ann. Pure Appl. Logic},
  FJOURNAL = {Annals of Pure and Applied Logic},
    VOLUME = {167},
      YEAR = {2016},
    NUMBER = {7},
     PAGES = {557--589},
      ISSN = {0168-0072},
   MRCLASS = {03B60 (03B55 03B65 03B70)},
  MRNUMBER = {3488885},
MRREVIEWER = {Fredrik Engstr\"{o}m},
       DOI = {10.1016/j.apal.2016.03.003},
}

@article{hawke,
	author = {Peter Hawke and Shane Steinert{-}Threlkeld},
	doi = {10.1007/s10988-020-09295-7},
	journal = {Linguistics and Philosophy},
	number = {2},
	pages = {475--511},
	publisher = {Springer Verlag},
	title = {Semantic Expressivism for Epistemic Modals},
	volume = {44},
	year = {2020},
}

@book{inqsembook,
    author = {Ciardelli, Ivano and Groenendijk, Jeroen and Roelofsen, Floris},
    title = "{Inquisitive Semantics}",
    publisher = {Oxford University Press},
    year = {2018},
    isbn = {9780198814788},
    doi = {10.1093/oso/9780198814788.001.0001},
}

@misc{anttila2025convexteamlogics,
      title={Convex Team Logics}, 
      author={Aleksi Anttila and Søren Brinck Knudstorp},
      year={2025},
      eprint={2503.21850},
      archivePrefix={arXiv},
      primaryClass={math.LO},
      url={https://arxiv.org/abs/2503.21850}, 
}

@article{hannula2018,
author = {Hannula, Miika and Kontinen, Juha and Virtema, Jonni and Vollmer, Heribert},
title = {Complexity of Propositional Logics in Team Semantic},
year = {2018},
issue_date = {January 2018},
publisher = {Association for Computing Machinery},
address = {New York, NY, USA},
volume = {19},
number = {1},
issn = {1529-3785},
doi = {10.1145/3157054},
journal = {ACM Trans. Comput. Logic},
month = jan,
articleno = {2},
numpages = {14},
keywords = {Propositional logic, dependence, inclusion, independence, model checking, satisfiability, team semantics, validity}
}

@misc{ciardelli2024,
      title={Inquisitive Neighborhood Logic}, 
      author={Ivano Ciardelli},
      year={2024},
      eprint={2411.04031},
      archivePrefix={arXiv},
      primaryClass={math.LO},
      url={https://arxiv.org/abs/2411.04031}, 
}

@article{Hirvonen24,
  author       = {Minna Hirvonen},
  title        = {The Implication Problem for Functional Dependencies and Variants of
                  Marginal Distribution Equivalences},
  journal      = {{ACM} Trans. Comput. Log.},
  volume       = {25},
  number       = {4},
  pages        = {1--23},
    doi = {10.1145/3677120},
  year         = {2024}
}

@article{AlbertG22,
title = {Unifying hidden-variable problems from quantum mechanics by logics of dependence and independence},
   JOURNAL = {Ann. Pure Appl. Logic},
  FJOURNAL = {Annals of Pure and Applied Logic},
volume = {173},
number = {10},
pages = {Paper No. 103088},
year = {2022},
issn = {0168-0072},
doi = {https://doi.org/10.1016/j.apal.2022.103088},
author = {Rafael Albert and Erich Grädel},
}

@article{DurandKV24,
  author       = {Arnaud Durand and
                  Juha Kontinen and
                  Jouko V{\"{a}}{\"{a}}n{\"{a}}nen},
  title        = {Modular {SAT}-based techniques for reasoning tasks in team semantics},
  journal      = {J. Comput. Syst. Sci.},
  volume       = {146},
doi = {https://doi.org/10.1016/j.jcss.2024.103575},
  pages        = {Paper No. 103575},
  year         = {2024}
}

@article{Abramsky:2021:team,
 title={TEAM SEMANTICS AND INDEPENDENCE NOTIONS IN QUANTUM PHYSICS}, DOI={10.1017/bsl.2025.10089}, journal={The Bulletin of Symbolic Logic}, author={Abramsky, SAMSON and Puljujärvi, JONI and Väänänen, JOUKO}, year={2025}, pages={1--58}}

@article{Corander:2019,
    title     = {A logical approach to context-specific independence},
    author    = {Corander, Jukka and Hyttinen, Antti and Kontinen, Juha and Pensar, Johan and Väänänen, Jouko},
    year      = {2019},
   JOURNAL = {Ann. Pure Appl. Logic},
  FJOURNAL = {Annals of Pure and Applied Logic},
    publisher = {Elsevier},
    volume    = {170},
    number    = {9},
    pages     = {975--992},
    doi       = {10.1016/j.apal.2019.04.004},
}

@inproceedings{Durand:2018:probabilistic,
    title     = {Probabilistic Team Semantics},
    author    = {Durand, Arnaud and Hannula, Miika and Kontinen, Juha and Meier, Arne and Virtema, Jonni},
    year      = {2018},
    booktitle = {Foundations of Information and Knowledge Systems},
    series    = {Lecture Notes in Computer Science},
    publisher = {Springer International Publishing},
    pages     = {186--206},
    doi       = {10.1007/978-3-319-90050-6\_11},
}

@article{Durand:2022:tractability,
    title     = {Tractability Frontier of Data Complexity in Team Semantics},
    author    = {Durand, Arnaud and Kontinen, Juha and de Rugy-Altherre, Nicolas and Väänänen, Jouko},
    year      = {2022},
    journal   = {ACM Transactions on Computational Logic},
    volume    = {23},
    number    = {1},
    articleno = {3},
    pages     = {3:1--3:21},
    doi       = {10.1145/3471618},
}

@article{Hannula:2020:polyteam,
    title     = {Polyteam semantics},
    author    = {Hannula, Miika and Kontinen, Juha and Virtema, Jonni},
    journal   = {Journal of Logic and Computation},
    volume    = {30},
    number    = {8},
    pages     = {1541--1566},
    year      = {2020},
    doi       = {10.1093/logcom/exaa048},
}

@article{Hella:2024,
    title     = {Dimension in team semantics},
    author    = {Hella, Lauri and Luosto, Kerkko and Väänänen, Jouko}, 
    journal   = {Mathematical Structures in Computer Science},
    year      = {2024},
    volume    = {34},
    number    = {5},
    pages     = {410--454},
    doi       = {10.1017/S0960129524000021},
}

@inproceedings{Kontinen:2013:independence,
    title     = {Independence in Database Relations},
    author    = {Kontinen, Juha and Link, Sebastian and Väänänen, Jouko},
    booktitle = {Logic, Language, Information, and Computation},
    series    = {Lecture Notes in Computer Science},
    volume    = {8071},
    pages     = {179--193},
    publisher = {Springer Berlin Heidelberg},
    year      = {2013},
    doi       = {10.1007/978-3-642-39992-3\_17},
}

@article{degano2025,
author = {Degano, Marco and Marty, Paul and Ramotowska, Sonia and Aloni, Maria and Breheny, Richard and Romoli, Jacopo and Sudo, Yasutada},
year = {2025},
month = {01},
pages = {1-41},
title = {The ups and downs of ignorance},
volume = {33},
journal = {Natural Language Semantics},
doi = {10.1007/s11050-024-09226-3}
}

\end{document}